\definecolor{shadecolor}{rgb}{0.9,0.9,0.9}
\newtheorem{definition}{Definition}
\newtheorem{proposition}{Proposition}
\newtheorem{lemma}[proposition]{Lemma}
\newtheorem{theorem}[proposition]{Theorem}
\newtheorem{corollary}[proposition]{Corollary}
\def\squareforqed{\hbox{\rlap{$\sqcap$}$\sqcup$}}
\def\qed{\ifmmode\squareforqed\else{\unskip\nobreak\hfil
\penalty50\hskip1em\null\nobreak\hfil\squareforqed
\parfillskip=0pt\finalhyphendemerits=0\endgraf}\fi}
\def\endenv{\ifmmode\;\else{\unskip\nobreak\hfil
\penalty50\hskip1em\null\nobreak\hfil\;
\parfillskip=0pt\finalhyphendemerits=0\endgraf}\fi}
\newenvironment{proof}{\noindent \textbf{{Proof~} }}{\hfill $\blacksquare$}
\newcounter{remark}
\newcounter{example}
\mathchardef\ordinarycolon\mathcode`\:
\def\vcentcolon{\mathrel{\mathop\ordinarycolon}}
\newmdenv[skipabove=7pt,
skipbelow=7pt,
backgroundcolor=darkblue!15,
innerleftmargin=5pt,
innerrightmargin=5pt,
innertopmargin=5pt,
leftmargin=0cm,
rightmargin=0cm,
innerbottommargin=5pt,
linewidth=1pt]{tBox}
\newmdenv[skipabove=7pt,
skipbelow=7pt,
backgroundcolor=red!15,
innerleftmargin=5pt,
innerrightmargin=5pt,
innertopmargin=5pt,
leftmargin=0cm,
rightmargin=0cm,
innerbottommargin=5pt,
linewidth=1pt]{rBox}
\newmdenv[skipabove=7pt,
skipbelow=7pt,
backgroundcolor=blue2!25,
innerleftmargin=5pt,
innerrightmargin=5pt,
innertopmargin=5pt,
leftmargin=0cm,
rightmargin=0cm,
innerbottommargin=5pt,
linewidth=1pt]{dBox}
\newmdenv[skipabove=7pt,
skipbelow=7pt,
backgroundcolor=darkkblue!15,
innerleftmargin=5pt,
innerrightmargin=5pt,
innertopmargin=5pt,
leftmargin=0cm,
rightmargin=0cm,
innerbottommargin=5pt,
linewidth=1pt]{sBox}
\definecolor{darkblue}{RGB}{0,76,156}
\definecolor{darkkblue}{RGB}{0,0,153}
\definecolor{blue2}{RGB}{102,178,255}
\definecolor{darkred}{RGB}{195,0,0}
\newcommand{\nc}{\newcommand}
\nc{\rnc}{\renewcommand}
\nc{\lbar}[1]{\overline{#1}}
\nc{\bra}[1]{\langle#1|}
\nc{\ket}[1]{|#1\rangle}
\nc{\ketbra}[2]{|#1\rangle\!\langle#2|}
\nc{\braket}[2]{\ltangle#1|#2\rangle}
\nc{\kett}[1]{|#1\rangle \rangle}
\nc{\braa}[1]{\langle \langle #1 |}
\nc{\proj}[1]{| #1\rangle\!\langle #1 |}
\nc{\avg}[1]{\langle#1\rangle}
\nc{\smfrac}[2]{\mbox{$\frac{#1}{#2}$}}
\nc{\tr}{\operatorname{Tr}}
\nc{\ox}{\otimes}
\nc{\dg}{\dagger}
\nc{\dn}{\downarrow}
\nc{\cA}{{\cal A}}
\nc{\cB}{{\cal B}}
\nc{\cC}{{\cal C}}
\nc{\cD}{{\cal D}}
\nc{\cE}{{\cal E}}
\nc{\cF}{{\cal F}}
\nc{\cG}{{\cal G}}
\nc{\cH}{{\cal H}}
\nc{\cI}{{\cal I}}
\nc{\cJ}{{\cal J}}
\nc{\cK}{{\cal K}}
\nc{\cL}{{\cal L}}
\nc{\cM}{{\cal M}}
\nc{\cN}{{\cal N}}
\nc{\cO}{{\cal O}}
\nc{\cP}{{\cal P}}
\nc{\cQ}{{\cal Q}}
\nc{\cR}{{\cal R}}
\nc{\cS}{{\cal S}}
\nc{\cT}{{\cal T}}
\nc{\cU}{{\cal U}}
\nc{\cV}{{\cal V}}
\nc{\cX}{{\cal X}}
\nc{\cY}{{\cal Y}}
\nc{\cZ}{{\cal Z}}
\nc{\cW}{{\cal W}}
\nc{\csupp}{{\operatorname{csupp}}}
\nc{\qsupp}{{\operatorname{qsupp}}}
\nc{\var}{{\operatorname{var}}}
\nc{\rar}{\rightarrow}
\nc{\lrar}{\longrightarrow}
\nc{\polylog}{{\operatorname{polylog}}}
\nc{\wt}{{\operatorname{wt}}}
\nc{\av}[1]{{\left\langle {#1} \right\rangle}}
\nc{\supp}{{\operatorname{supp}}}
\nc{\argmin}{{\operatorname{argmin}}}
\def\x{\xi}
\def\O{\Omega}
\nc{\RR}{{{\mathbb R}}}
\nc{\CC}{{{\mathbb C}}}
\nc{\FF}{{{\mathbb F}}}
\nc{\NN}{{{\mathbb N}}}
\nc{\ZZ}{{{\mathbb Z}}}
\nc{\PP}{{{\mathbb P}}}
\nc{\QQ}{{{\mathbb Q}}}
\nc{\UU}{{{\mathbb U}}}
\nc{\EE}{{{\mathbb E}}}
\nc{\id}{{\operatorname{id}}}
\nc{\CHSH}{{\operatorname{CHSH}}}
\newcommand{\idop}{ \mathbbm{1} }
\nc{\be}{\begin{equation}}
\nc{\ee}{{\end{equation}}}
\nc{\bea}{\begin{eqnarray}}
\nc{\eea}{\end{eqnarray}}
\nc{\rU}{\mbox{U}}
\nc{\ob}[1]{#1}
\nc{\SEP}{{\text{\rm SEP}}}
\nc{\NS}{{\text{\rm NS}}}
\nc{\LOCC}{{\text{\rm LOCC}}}
\nc{\PPT}{{\text{\rm PPT}}}
\nc{\EXT}{{\text{\rm EXT}}}
\nc{\Sym}{{\operatorname{Sym}}}
\nc{\ERLO}{{E_{\text{r,LO}}}}
\nc{\ERLOCC}{{E_{\text{r,LOCC}}}}
\nc{\ERPPT}{{E_{\text{r,PPT}}}}
\nc{\ERLOCCinfty}{{E^{\infty}_{\text{r,LOCC}}}}
\nc{\Aram}{{\operatorname{\sf A}}}
\def\grd@save@target#1{%
  \def\grd@target{#1}}
\def\grd@save@start#1{%
  \def\grd@start{#1}}
\tikzset{
  grid with coordinates/.style={
    to path={%
      \pgfextra{%
        \edef\grd@@target{(\tikztotarget)}%
        \tikz@scan@one@point\grd@save@target\grd@@target\relax
        \edef\grd@@start{(\tikztostart)}%
        \tikz@scan@one@point\grd@save@start\grd@@start\relax
        \draw[minor help lines,magenta] (\tikztostart) grid (\tikztotarget);
        \draw[major help lines] (\tikztostart) grid (\tikztotarget);
        \grd@start
        \pgfmathsetmacro{\grd@xa}{\the\pgf@x/1cm}
        \pgfmathsetmacro{\grd@ya}{\the\pgf@y/1cm}
        \grd@target
        \pgfmathsetmacro{\grd@xb}{\the\pgf@x/1cm}
        \pgfmathsetmacro{\grd@yb}{\the\pgf@y/1cm}
        \pgfmathsetmacro{\grd@xc}{\grd@xa + \pgfkeysvalueof{/tikz/grid with coordinates/major step}}
        \pgfmathsetmacro{\grd@yc}{\grd@ya + \pgfkeysvalueof{/tikz/grid with coordinates/major step}}
        \foreach \x in {\grd@xa,\grd@xc,...,\grd@xb}
        \node[anchor=north] at (\x,\grd@ya) {\pgfmathprintnumber{\x}};
        \foreach \y in {\grd@ya,\grd@yc,...,\grd@yb}
        \node[anchor=east] at (\grd@xa,\y) {\pgfmathprintnumber{\y}};
      }
    }
  },
  minor help lines/.style={
    help lines,
    step=\pgfkeysvalueof{/tikz/grid with coordinates/minor step}
  },
  major help lines/.style={
    help lines,
    line width=\pgfkeysvalueof{/tikz/grid with coordinates/major line width},
    step=\pgfkeysvalueof{/tikz/grid with coordinates/major step}
  },
  grid with coordinates/.cd,
  minor step/.initial=.2,
  major step/.initial=1,
  major line width/.initial=2pt,
}
\def\problem@s{}
\newcounter{problems@cnt}
\newcommand{\allproblems}{\problem@s}
\definecolor{tensorblue}{rgb}{0.8,0.9,1}
\tikzset{ten/.style={fill=tensorblue}}
\newcommand{\opnm}{\operatorname}
\definecolor{colortwo}{rgb}{0.4,0.77,0.17}
\definecolor{colorthree}{rgb}{0.01,0.51,0.93}
\nc{\EPPT}{{E_{\operatorname{PPT}}}}
\nc{\EPPTone}{{E_{\operatorname{PPT}}^{(1)}}}
\nc{\EK}{{E_{\kappa}}}
\newcommand*\samethanks[1][\value{footnote}]{\footnotemark[#1]}
\begin{document}
\title{Optimizer-Dependent Generalization Bound \\ for Quantum Neural Networks}

\author[1]{Chenghong Zhu\thanks{Chenghong Zhu and Hongshun Yao contributed equally to this work.}}
\affil[1]{\small Thrust of Artificial Intelligence, Information Hub,\par The Hong Kong University of Science and Technology (Guangzhou), Guangdong 511453, China\thanks{\textbf{Important notice}: A preliminary version of our paper was submitted to ICLR 2025 and was \href{https://openreview.net/forum?id=lirR6Wfkd6}{publicly available} on OpenReview since October 4, 2024. It has come to our notice that an recent paper, \href{https://arxiv.org/abs/2501.12737}{arXiv:2501.12737}, which investigates the same problem, was submitted to arXiv on January 22, 2025. For reasons unknown to us, \href{https://arxiv.org/abs/2501.12737}{arXiv:2501.12737} contains substantial and inexplicable overlaps with our earlier version that has been accessible on OpenReview since October 4, 2024.}}

\author[1]{Hongshun Yao\samethanks[1]}

\author[1]{Yingjian Liu}

\author[1]{Xin Wang\thanks{felixxinwang@hkust-gz.edu.cn}}

\date{\today}
\maketitle

\begin{abstract}
Quantum neural networks (QNNs) play a pivotal role in addressing complex tasks within quantum machine learning, analogous to classical neural networks in deep learning. Ensuring consistent performance across diverse datasets is crucial for understanding and optimizing QNNs in both classical and quantum machine learning tasks, but remains a challenge as QNN's generalization properties have not been fully explored. In this paper, we investigate the generalization properties of QNNs through the lens of learning algorithm stability, circumventing the need to explore the entire hypothesis space and providing insights into how classical optimizers influence QNN performance. By establishing a connection between QNNs and quantum combs, we examine the general behaviors of QNN models from a quantum information theory perspective. Leveraging the uniform stability of the stochastic gradient descent algorithm, we propose a generalization error bound determined by the number of trainable parameters, data uploading times, dataset dimension, and classical optimizer hyperparameters. Numerical experiments validate this comprehensive understanding of QNNs and align with our theoretical conclusions. \textcolor{black}{As the first exploration into understanding the generalization capability of QNNs from a unified perspective of design and training, our work offers practical insights for applying QNNs in quantum machine learning.}
\end{abstract}

\section{Introduction}

Quantum computing leverages the laws of quantum mechanics to solve complex problems more efficiently than classical computers, offering notable quantum speedups in areas such as cryptography~\cite{Shor1997}, and quantum simulations~\cite{lloyd1996universal, childs2018toward}. Recent advancements in quantum hardware have demonstrated quantum advantages in specific tasks~\cite{arute2019advantage, zhong2020advantage, wu2021advantage}, catalyzing the exploration of quantum computing's potential in artificial intelligence. This interdisciplinary connection has given rise to \textit{quantum machine learning}~\cite{biamonte2017quantum,You2023,Tang2022,Liu2022a,Caro2022a,Cerezo2022a,Huang2022a,Qian2022,Yu2022a,Tian2023,Li2019c,Li2022,Jerbi2021,Li2021-classifier,Huang2021b}. 

One of the leading frameworks in quantum machine learning is the \textit{quantum neural network} (QNN), which represents the quantum analog of classical artificial neural networks and typically refers to parameterized quantum circuits that are trainable based on quantum measurement results. A well-known architecture within QNNs is the data re-uploading QNN~\cite{PerezSalinas2020datareuploading,gil2020input}, which integrates multiple training and data encoding layers within a single quantum circuit. This approach significantly enhances the expressivity of the models, allowing them to approximate functions more effectively~\cite{PerezSalinas2020datareuploading, Salinas2021single, yu2022power, manzano2023parametrized, jerbi2023quantum, yu2023provable}. Such characteristics make the data re-uploading QNN a suitable quantum machine model for supervised learning tasks.

Despite these advancements in quantum machine learning, critical challenges remain, particularly in understanding and predicting the performance of models in practical settings. A fundamental criterion for evaluating the performance of any learning algorithm is the generalization gap~\cite{Kawaguchi2022gendeep}. This gap essentially measures the difference between a model’s accuracy on training data and its expected accuracy on unseen data, providing essential theoretical guidance on determining the amount of training data required and informing the design of the model's architecture to ensure robust generalization capabilities.

QNNs typically update trainable parameters using classical optimizers, where classical optimizers play a crucial role in the training process by adjusting the parameters based on quantum measurement results. However, the integration of various classical optimizers with quantum processes introduces additional complexity, particularly in ensuring that these updates contribute positively to minimizing the generalization gap. Hence, providing theoretical guidance on the design and training of QNNs with regard to the strategic use of specific optimizers is of fundamental importance. Among these optimizers, stochastic gradient descent (SGD) algorithm is the most commonly used. Understanding the effects of SGD is vital as it not only enhances our comprehension of this optimizer but also informs the uses of its advanced variants in QNN applications.

To deepen our understanding of QNNs, our work establishes theoretical guarantees on how classical optimizers impact their generalization performance. To summarize, our contributions include:

\begin{figure}[!t]
    \centering
    \includegraphics[width=0.9\linewidth]{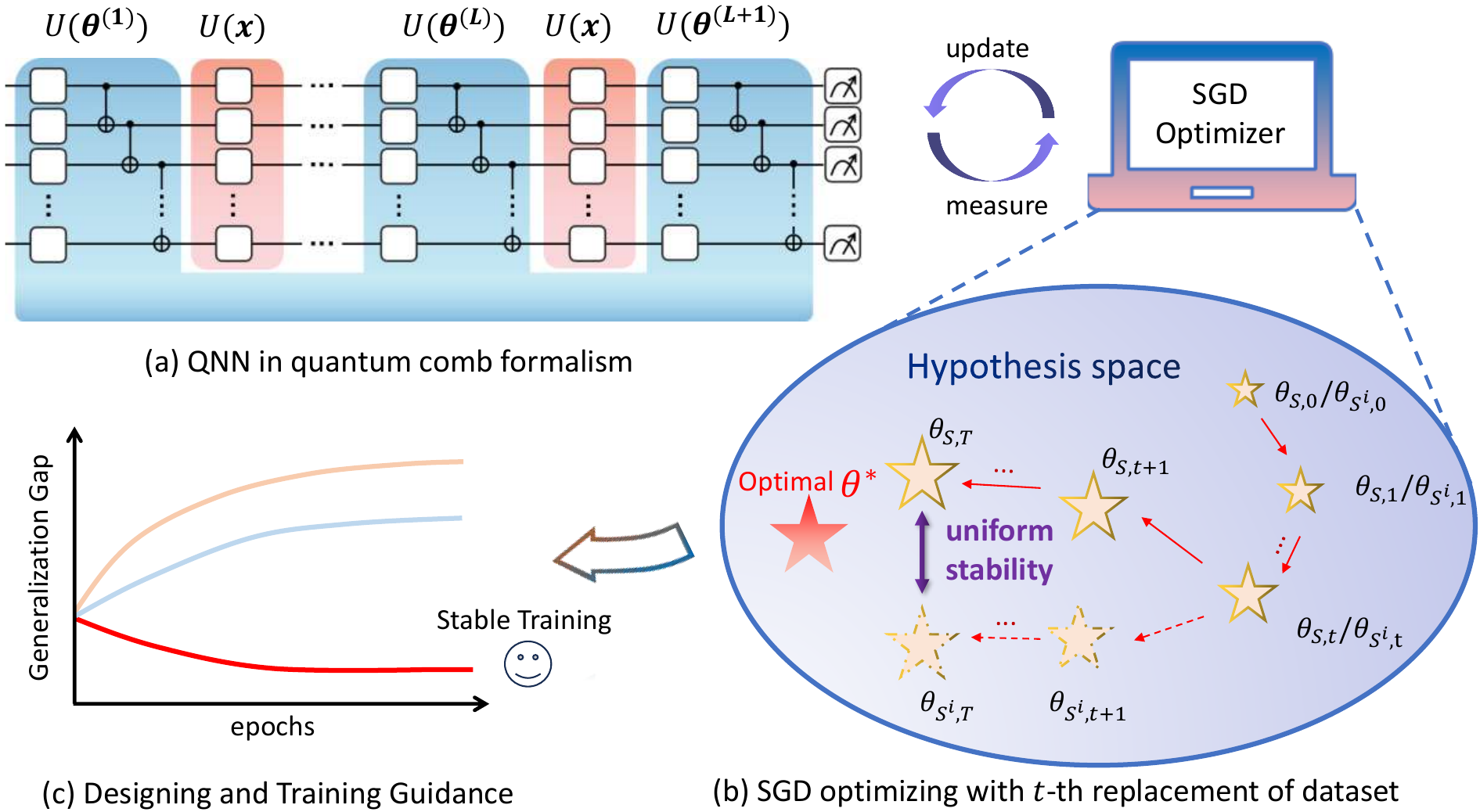}
    \caption{Overview of our work. (a) We investigate the relationship between QNNs and quantum combs, highlighting how this connection informs our understanding of QNN dynamics. (b) We demonstrate that QNNs utilizing the SGD algorithm and trained for $T$ iterations exhibit $\beta_m$-uniform stability. This stability metric quantifies the maximum change in the QNN's output due to alterations in a single training example. (c) Leveraging the uniform stability, we derive a generalization error bound that assesses the QNN's performance on unseen data, thereby better understanding the performance of QNNs in practical applications.}
    \label{fig:main_figure}
\end{figure}

\begin{itemize}
    \item We introduce a more general perspective in the study of QNNs by conceptualizing them as a special form of trainable quantum combs (cf. Sec.~\ref{subsec:comb_and_data_reup}). This approach allows us to leverage the rich theoretical framework of quantum combs to analyze the properties and dynamics of QNNs.
    \item We then investigate the stability of QNNs based on the quantum comb perspective to show that QNNs are uniformly stable (cf. Fig.\ref{fig:main_figure}). We further establish an upper bound on generalization error for data reuploading model (cf. Sec.\ref{subsec:implications}). This bound not only informs the control of QNN expressivity power through the number of layers and trainable parameters during the designing of QNNs, but also offers new insights into training QNNs with input data dimensions and classical optimizer-dependent parameters. Importantly, the bound introduces a new training guideline suggesting that the learning rate and the number of trainable gates should be designed to be inversely proportional to enable stable training.
    \item To substantiate our theoretical claims, we have conducted extensive numerical experiments focused on assessing changes in the expressivity and stability of QNNs. These experiments underscores the importance of stable learning for the practical training of these networks and hence draw a guideline for future QNN developments.
\end{itemize}

\subsection{Related Work}

Currently, the theory of generalization in QNNs mainly focuses on the complexity measures. \cite{Du2022generalisationbound} derives an upper bound for generalization error with the dependence of the number of trainable quantum gates and the operator norm of the observable by leveraging covering number~\cite{vapnik2013nature} to quantify the expressivity of VQAs.  Later, \cite{Caro_2022} uses quantum channels to derive more general results and~\cite{du2023problem} extends it to multi-class classification tasks. \cite{abbas2021power} uses the effective dimension~\cite{berezniuk2020scaledependent, abbas2021effective} as another complexity measure for the parameter space of QNN. \cite{Gyurik2023structuralrisk} uses Vapnik-Chervonenkis dimension~\cite{vapnik2015uniform} that investigates the balance of empirical and generalization performance on the dimension of inputs and the Frobenius norm of the observable. Besides, \cite{Caro_2021}, \cite{Kaifeng2021rademacher, Kaifeng2022Rada} and \cite{qi2023theoretical} derives an upper bound based on Rademacher complexity~\cite{bartlett2002rademacher}, where the bound in~\cite{Kaifeng2021rademacher, Kaifeng2022Rada} is based on the circuit depth and the amount of non-stabilizerness in the circuit for both noise and noiseless models, and \cite{qi2023theoretical} mainly investigates the generalization of variational function regression models using tensor-network encodings. Furthermore, the information-theoretic bound~\cite{inforGeneralization} based on Rényi mutual information and the generalization behavior of a specific class of QNN~\cite{hardware_efficientVC, kubler2021inductive,du2021A, Wang2021specific, huang2021power} are also established.

The study of generalization bounds in relation to stability in classical machine learning has laid the substantial groundwork for understanding how small changes in the training set can impact the outputs of learning algorithms~\cite{feldman2018generalization, bousquet2020sharper, klochkov2021stability, yuan2024l2}. From seminal contributions by Bousquet and Elisseeff~\cite{bousquet2002stability}, stability has been demonstrated to yield dimension-independent generalization bounds for both deterministic learning algorithms~\cite{mukherjee2006learning, shalev2010learnability} and randomized approaches such as stochastic gradient descent (SGD)~\cite{elisseeff2005stability, hardt2016train, verma2019stability}. However, despite these significant advances, the analysis of generalization guarantees from the perspective of stability remains largely unexplored in the context of QNNs. Also, existing generalization bounds for QNNs do not account for the impact of the classical optimizer, leading to a significant gap in unified designing and training guidance for effectively implementing powerful QNNs.

\section{Preliminary}

\subsection{Quantum computing basics and notations}

\textbf{Notations.} \quad We use $\|\cdot \|_p$ to denote the $l_p$-norm for vectors and the Schatten-$p$ norm for matrices. $A^\dagger$ is the conjugate transpose of matrix $A$ and $A^T$ is the transpose of $A$. $\tr[ A]$ represent the trace of $A$. The $\mu$-th component of the vector $\bm{\theta}$ is denoted as $\theta^{(\mu)}$ and the derivative with respect to $\theta^{(\mu)}$ is denoted as $\frac{\partial}{\partial\theta^{(\mu)}}$. We employ $\mathcal{O}$ as the asymptotic notation of upper bounds.

\textbf{Quantum state.} \quad In quantum computing, the basic unit of quantum information is a quantum bit or qubit. A single-qubit pure state is described by a unit vector in the Hilbert space $\mathbb{C}^2$, which is commonly written in Dirac notation $\ket{\psi} = \alpha\ket{0} + \beta\ket{1}$, with $\ket{0} = (1,0)^T$, $\ket{1} =(0,1)^T$, $\alpha,\beta\in \mathbb{C}$ subject to $|\alpha|^2 + |\beta|^2 = 1$. The complex conjugate of $\ket{\psi}$ is denoted as $\bra{\psi} = \ket{\psi}^\dagger$. The Hilbert space of $N$ qubits is formed by the tensor product ``$\otimes$'' of $N$ single-qubit spaces with dimension $d=2^N$. General mixed quantum states are represented by the density matrix, which is a positive semidefinite matrix $\rho \in \mathbb{C}^{d\times d}$ subject to $\tr[\rho]=1$. 

\textbf{Quantum gate.} \quad Quantum gates are unitary matrices, which transform quantum states via matrix-vector multiplication. Common single-qubit rotation gates include $R_x(\theta)=e^{-i\theta X/2}$, $R_y(\theta)=e^{-i\theta Y/2}$, $R_z(\theta)=e^{-i\theta Z/2}$, which are in the matrix exponential form of Pauli matrices,
\begin{equation}
    X = \begin{pmatrix}
        0 & 1 \\ 1 & 0
    \end{pmatrix},\quad\quad
    Y = \begin{pmatrix}
        0 & -i \\ i & 0 \\
    \end{pmatrix},\quad\quad
    Z = \begin{pmatrix}
        1 & 0 \\ 0 & -1 \\
    \end{pmatrix}.
\end{equation}
Common two-qubit gates include controlled-X gate $\opnm{CX}=I\oplus X$ ($\oplus$ is the direct sum), which can generate quantum entanglement among qubits.

\subsection{Quantum Neural Network}

\textcolor{black}{The quantum neural network typically contains three parts, i.e. an $N$-qubit quantum circuit $U(\bm\theta, \bm x)$, observables $\cM \in \mathbb{C}^{d \times d}$ and a classical optimizer, where $\bm x\in\mathbb{R}^D$ is the encoding of classical data and $\bm{\theta}\in\mathbb{R}^K$ are trainable parameters.  The optimization of the parameterized quantum circuit is based on the feedback from the quantum measurements using the classical optimizer. By assigning a predefined loss function $\ell(\cdot)$, the parameter update rule at iteration $t$ is $\bm{\theta}_{t+1} = \bm{\theta}_{t} - \eta \frac{\partial \ell(f(\bm{\theta}_t,  \bm x, \cM), y)}{\partial \bm{\theta}}$, where $\eta$ is the learning rate, $y$ is the target label, and $f(\cdot)$ is the output of the quantum circuit for the given quantum measurement.  The gradient information can be obtained by the parameter shift rule or other methods~\cite{Mitarai2018paramshift, Schuld2019paramshift, Stokes2020paramshift} and the design of observables is related to the presense of barren plateaus~\cite{cerezo2021cost}.
}

\paragraph{Basic setup.} In this work, we consider the classification problem, where the training dataset $S:= \{z_i = (\bm x_i, y_i)\}_{i=1}^{m}$ consists of $m = |S|$ samples independently and identically drawn from an unknown probability distribution $\mathbb{D}$. The objective of the machine learning algorithm $\cA$ is to utilize $S$ to infer an optimal hypothesis or an optimal classifier $f_{\cA_S}^*(\cdot)$ that minimizes the expected risk $R(\cA_S) := \mathbb{E}_{(x,y) \sim \mathbb{D}}[\ell(f_{A_S}(\bm x), \bm{y})]$~\cite{Kawaguchi2022gendeep}, considering the inherent randomness in $\cA$ and $S$. Given that the probability distribution behind data space $\mathbb{D}$ is generally inaccessible, directly minimizing $\cR(\cA_S)$ becomes intractable.  Consequently, a more practical way involves inferring $f^*(\cdot)$ by minimizing the empirical risk $\hat{\cR}_S(\cA_S) := \frac{1}{m}\sum_{i=1}^{m} \ell(f( \bm x_i), y_i)$ on the training dataset $S$. The difference between the empirical risk and the expected risk, known as generalization gap $\cR(\cA_S) - \hat{\cR}_S(\cA_S)$, elucidates when and how minimizing $\hat{\cR}_S(\cA_S)$ effectively approximates the minimization of $\cR(\cA_S)$.

\section{Main Result}\label{sec:main_results}
In this section, we develop our main result of the generalization bound in quantum machine learning models under the discussion of stability. In short, we make connection between quantum comb and data re-uploading QNNs by expressing the output function via a more general form.  Then using the characterization of quantum combs in the output function, we derive an upper bound on the generalization gap for data re-uploading QNN that depends on several QNN parameters and hyperparameters that associated with the classical optimizer. We will first introduce the quantum combs and their connection to data re-uploading QNN in Section.~\ref{subsec:comb_and_data_reup}, then we present that QNNs are $\beta_m$-uniform stable in Section.~\ref{subsec:generalization_bound} and the generalization bound with its implications in Section.~\ref{subsec:implications}.

\subsection{Quantum Combs and Data Re-uploading QNNs}\label{subsec:comb_and_data_reup}

A native quantum classifier initially encodes classical data into the quantum state and utilizes a quantum circuit for classification tasks~\cite{Mitarai2018paramshift}. However, this naive approach faces limitations, even with single qubit classifiers, where a single rotation fails to adequately separate complex data patterns.
To overcome these limitations, the data re-uploading QNN is proposed~\cite{PerezSalinas2020datareuploading,gil2020input}, drawing inspiration from classical feed-forward neural networks, in which the classical data is entered and processed in the network several times. The data re-uploading QNN have shown its universality of function approximation~\cite{PerezSalinas2020datareuploading, Salinas2021single, yu2022power, manzano2023parametrized} and strong learning performances~\cite{jerbi2023quantum, yu2023provable}, making it a suitable quantum machine model for supervised learning tasks.

Typically, the model repeatedly encodes classical data into the parameters of quantum gates throughout the quantum circuit. For simplicity, we can assume that the initial input state is denoted as $\rho_{in}$, followed by tunable gates $U(\bm{\theta})$ that are interspersed with data encoding operations $U(\bm{x})$. An $L$-layer \textit{data re-uploading QNN} repeats this process 
$L$ times. We find that the data re-uploading QNN in quantum machine learning is naturally a sequential quantum comb~\cite{chiribella2008quantum}. In the following, we show the output of data re-uploading QNNs $f(\cC, x, \cM)$ with respect to the measurement operator $\cM$ in quantum circuits can be well represented by a sequential quantum comb $\cC$ with the separation of trainable parts and classical data $\bm{x}$. Hence, the generalization bound is straightforward to analyze and bounded as shown in Section~\ref{subsec:generalization_bound}. 

Using the basic definition of quantum combs and the link product property of Choi–Jamiołkowski isomorphism~\cite{choi1975completely, jamiolkowski1972linear}, we have the following proposition derived from Appendix~\ref{appendix:quantum_comb} that characterizes the output of the general quantum comb, 
\begin{proposition}
For a general $L$-slot sequential quantum comb $\cC$ with classical-data encoding unitary $U(\bm x)$ and measurement-channel $\cM$, we can represent the output of the quantum comb as
\begin{equation}
\begin{aligned}
    f(\cC, \bm x, \cM) &= \tr \left[\cC \cdot \rho_{in}^T \ox \left(\cJ_{U}(\bm x)^{\otimes L}\right)^{T} \ox \cM)\right],
\end{aligned}
\end{equation}
where $\cJ_{U}(\bm x)$ refers to the Choi representation of unitary $U(\bm x)$.
\end{proposition}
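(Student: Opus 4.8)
The plan is to treat the data re-uploading circuit as a \emph{closed} quantum network and to evaluate its scalar output through the link-product calculus for Choi operators of Chiribella--D'Ariano--Perinotti. By definition, $f(\cC,\bm x,\cM)$ is the number obtained when one inserts the fixed input state $\rho_{in}$ at the comb's first input wire, plugs the same data-encoding unitary channel $U(\bm x)$ (with Choi operator $\cJ_U(\bm x)$) into each of the $L$ slots, and applies the measurement effect $\cM$ at the terminal output wire, leaving no open wires. Since every wire is contracted, the result is a scalar, and the goal is to exhibit it as a single trace against $\cC$.

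First I would recall the link product: for Choi operators $A$ and $B$ sharing a system $\cH$, one has $A*B=\tr_{\cH}\!\big[(A^{T_\cH}\ox \idop)(\idop \ox B)\big]$, which returns the Choi operator of the composite network and is associative and commutative up to reordering of tensor factors. The composite here is $\cC$ linked successively with $\rho_{in}$, the $L$ copies of $\cJ_U(\bm x)$, and $\cM$. Next I would invoke the boundary case of this calculus, namely the generalized Born rule $\tr[\cN(\rho)M]=\tr\big[\cJ_\cN\,(\rho^T\ox M)\big]$ for a single channel $\cN$. This is precisely the claimed identity for $L=1$, and it already exhibits the characteristic asymmetry: an object fed into an \emph{input} wire ($\rho$) is transposed, whereas an object read off an \emph{output} wire ($M$) is not.

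The main step is then to iterate this over the $L$ slots, either by induction on $L$ or, equivalently, by repeated application of the associativity of $*$. Each additional slot insertion contributes one further transposed factor $\cJ_U(\bm x)^T$ tensored onto the accumulated operator, while the single untransposed $\cM$ survives at the output end; because the identical unitary is plugged into all slots, these factors collect into $\big(\cJ_U(\bm x)^{\ox L}\big)^T$. Assembling the partial transposes over the input system and the $L$ contracted slot systems yields exactly $\rho_{in}^T \ox \big(\cJ_U(\bm x)^{\ox L}\big)^T \ox \cM$, and the final full contraction becomes $\tr\big[\cC\cdot(\rho_{in}^T\ox(\cJ_U(\bm x)^{\ox L})^T\ox\cM)\big]$.

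The part I expect to be most delicate is the \emph{bookkeeping of transposes and tensor-factor ordering}, rather than any deep structural fact. One must fix a consistent convention for which comb legs are ``input-type'' versus ``output-type,'' verify that the partial transposes generated at each link product on the contracted slot systems assemble into the global transpose on the $\rho_{in}$ and $\cJ_U(\bm x)^{\ox L}$ blocks, and confirm that $\cM$, sitting on the terminal output wire, indeed escapes transposition, all while keeping the ordering of the tensor factors aligned with the slot ordering of $\cC$. Once these conventions are pinned down and checked against the $L=1$ Born-rule base case, the general formula follows directly.
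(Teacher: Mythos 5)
Your proposal is correct and follows essentially the same route as the paper: both arguments contract the comb with the inserted operations via the link product of Choi operators, with the transposes landing on the objects fed into input-type wires ($\rho_{in}$ and the $L$ copies of $\cJ_U(\bm x)$) and $\cM$ remaining untransposed at the terminal output. The only cosmetic difference is that the paper performs the contraction of all $L$ slots in a single link product $\cC \star \cJ_U(\bm x)^{\ox L}$ (a partial trace over the slot systems) before linking with $\rho_{in}$ and $\cM$, whereas you propose iterating slot by slot; these are equivalent by the associativity you invoke.
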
 

The quantum comb $\cC$ can describe a wide range of quantum operations, from simple quantum gates to complex processes that involve multiple steps and interactions and it basically can express any quantum operations that may apply on the QNNs. However, deriving a meaningful generalization bound requires detailed information about the parameters of the QNNs. The general comb formulation, while robust, may not retain all the necessary parameter-specific information needed for this purpose.

We then systematically narrow its scope to investigate the process composition, focusing on a structure with $L$ layers of tunable quantum layers interspersed with classical data encoding channels. In the following, we show the evolution of data re-uploading QNNs can be well represented by a sequential quantum comb with the separation of parameters and re-uploaded data.
\begin{corollary}
For the data re-uploading QNN with depth $L$, we have the output function $f(\bm{\theta}, \bm x, \cM)$ in Choi representation as follows: 
\begin{equation}
    f(\bm{\theta}, \bm x, \cM) =  \tr\left[\bigotimes_{l=1}^{L+1}\cJ_{U}(\bm{\theta}^{(l)}) \cdot \left(\rho_{in}^T \ox \left(\cJ_{U}(\bm x)^{\otimes L}\right)^{T}\ox \cM \right)\right],
\end{equation}
where $\cJ_{U}(\bm{\theta}^{(l)}) $ denotes  the Choi representation of parameterized gates in the $l$-th layer and $\bm{\theta}^{(l)} \in \bm{\theta}$ refers to the parameters used in the $l$-th layer.
\end{corollary}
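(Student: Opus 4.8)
The plan is to derive the Corollary as a direct specialization of the Proposition: a depth-$L$ data re-uploading QNN is exactly the comb $\cC$ whose $L+1$ teeth are the tunable layers $U(\bm\theta^{(l)})$, with the $L$ data-encoding unitaries $U(\bm x)$ inserted into its slots, so the only thing left to establish is the Choi representation of this particular comb. First I would fix the wire structure, writing the circuit as the alternating product $U(\bm\theta^{(L+1)})\,U(\bm x)\,U(\bm\theta^{(L)})\cdots U(\bm x)\,U(\bm\theta^{(1)})$ acting on $\rho_{in}$ and read out against $\cM$. This identifies the tunable blocks as the teeth of the comb and the data-encoding blocks as the channels plugged into the $L$ slots, which is precisely the trainable/non-trainable split assumed by the Proposition.

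The key step is to show that the comb assembled from the tunable layers has Choi operator $\cC=\bigotimes_{l=1}^{L+1}\cJ_U(\bm\theta^{(l)})$. I would argue this from the link-product formulation of the \Choi isomorphism: because each tooth is separated from the next by an inserted data-encoding channel, the teeth act on pairwise-disjoint input/output wire segments, and the link product of Choi operators supported on disjoint wires is simply their tensor product. All the sequential contractions that glue tooth $l$ to tooth $l+1$ are mediated by the interspersed $U(\bm x)$ blocks together with $\rho_{in}$ and $\cM$, which are collected into the non-trainable factor and carried out by the single global trace (with the transposes on $\rho_{in}$ and $(\cJ_U(\bm x)^{\otimes L})^T$ encoding the link-product contractions).

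Substituting $\cC=\bigotimes_{l=1}^{L+1}\cJ_U(\bm\theta^{(l)})$ into $f(\cC,\bm x,\cM)=\tr[\cC\cdot(\rho_{in}^T\ox(\cJ_U(\bm x)^{\otimes L})^T\ox\cM)]$ from the Proposition then immediately gives the claimed formula for $f(\bm\theta,\bm x,\cM)$, the $L$ factors of $\cJ_U(\bm x)$ corresponding to the $L$ re-uploading slots.

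I expect the main obstacle to be bookkeeping rather than any conceptual leap: one must track exactly which Hilbert-space factor each Choi operator occupies, verify that the transposes land on the correct tensor components so that the global trace contracts every internal wire precisely once, and confirm that the memory wires threading the comb are correctly matched with the register wires shared between consecutive teeth and data channels. Keeping these index identifications consistent across all $L+1$ layers is the delicate part, but it is a routine (if tedious) verification once the wire diagram of the re-uploading circuit is laid out.
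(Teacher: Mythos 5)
Your proposal is correct and follows essentially the same route as the paper: the paper's proof likewise identifies the trainable layers as the teeth of the comb, uses the fact that Choi operators of gates on disjoint wire segments link-product to a tensor product (so $\cC=\bigotimes_{l=1}^{L+1}\cJ_U(\bm\theta^{(l)})$), builds the full Choi operator recursively by linking in the $L$ copies of $\cJ_U(\bm x)$, and then contracts with $\rho_{in}^T$ and $\cM$ exactly as in the Proposition. The only difference is presentational — the paper spells out the $L=2$ case and then recurses, whereas you invoke the disjoint-support property directly — but the underlying argument is the same.
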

This formulation is particularly adept at utilizing quantum dynamics to characterize the output of QNN models. This capability stems from the comb's structured integration of quantum operations, which systematically manipulate and evolve the quantum states based on input data through a series of interconnected dynamics. 
In the following section, we will demonstrate how this formulation aids in analyzing the stability of QNNs.

\subsection{SGD Stability of QNNs}~\label{subsec:generalization_bound}
Stability in machine learning is determined by analyzing how sensitive the learning algorithm is to modifications in the dataset, such as removing or replacing a data point. For the convenience of this work, we will focus on the scenario where data is replaced and we also note that the concepts of removing and replacing data are essentially interchangeable in the context of stability analysis. This approach allows us to systematically explore how small changes in the dataset influence the performance and reliability of the learning algorithm. Denoting the dataset with replacement on $i$-th data with $z^{'}:=(\bm x^\prime,y^\prime)$ as $S^i$ for $\bm x^\prime \in\mathbb{R}^D$ and $y^\prime \in \mathbb{R}$, where $S^i:= \{z_1, \cdots z_{i-1}, z^{'}, z_{i+1},\cdots z_m\}$, the uniform stability is formally defined as follows:
\begin{definition}[Uniform Stability~\cite{bousquet2002stability}]\label{def:unifrom_stability}
For a randomized learning algorithm $\cA_S$, it is said to be $\beta_m$-uniformly stable with respect to a loss function $\ell(\cdot)$, if it satisfies,
\begin{equation}
    \sup_{S,z}|\mathbb{E}_{\cA}[\ell(\cA_S, z)] - \mathbb{E}_{\cA}[\ell(\cA_{S^i}, z)]| \leq 2\beta_m.
\end{equation}
\end{definition}
It basically quantifies the maximum change in the output function due to a change in one training example, uniformly over all possible datasets. A randomized algorithm $\cA$ is uniformly stable, implying that the models it learns from any two datasets, which differ by only one element, will yield nearly identical predictions across inputs.

In the context of QNNs, the training procedure typically employs stochastic gradient descent (SGD) algorithm~\cite{zhang2020toward, qian2022dilemma, sweke2020stochastic}. For a given training set $S$, the objective function to be minimized can be expressed as, 
\begin{equation}
    \min_{\bm\theta}\quad \frac{1}{m}\sum_{i=1}^{m}\ell(f(\bm\theta_S, \bm x_i, \cM), y_i).
\end{equation}
The stochastic gradient update rule for SGD at iteration $t$ is given by 
\begin{equation}
\theta_{S, t+1}^{(j)} = \theta_{S, t}^{(j)} - \eta \frac{\partial}{\partial \theta_{S, t}^{(j)}} \ell(f(\bm\theta_{S, t}, \bm x_{i}, \cM), y_{i}),\quad j=1,\cdots,K,
\end{equation}
where $\eta > 0$ is the learning rate, $ \bm x_{i}$ and $y_{i}$ are the input and output of the randomly selected training example chosen uniformly at each iteration $t$ and \textcolor{black}{$K$ is the number of trainable parameters}. The SGD algorithm executes these stochastic gradient updates iteratively, refining the model parameters to minimize the loss over the training set.

We also assume the loss function $\ell$ is Lipschitz continuous and smooth with constants $\cC_1$ and $\cC_2$ respectively. This assumption is considered to be quite lenient, as the loss functions used in QNNs typically exhibit Lipschitz continuity and are subject to an upper limit defined by the constant. This characteristic is extensively used to investigate the performance capabilities of QNNs~\cite{Huang2021, Du2022generalisationbound, McClean_2018, yu2023provable}. Drawing on concepts from stability investigations in classical neural networks~\cite{verma2019stability, hardt2016train}, we present a proof sketch that demonstrates the uniform stability of the QNN.
\begin{theorem}[Uniform stability of data reuploading QNN]\label{thm:stability_bound}
    Assume the loss function $\ell$ is Lipschitz continuous and smooth. An $L$-layer data re-uploading QNN trained using the SGD algorithm for $\cT$ iterations is $\beta_m$-uniformly stable, where
\begin{equation}\label{eq:stability_eq}
      \beta_m \leq \frac{ LD \| \cM\|_{\infty} }{m} \cO \left(( \eta K\Vert \cM \Vert_\infty)^{\cT} \right).
\end{equation}
$K$ denotes the number of trainable parameters in the model, $\cM$ is the selected measurement operator, $\eta$ is the learning rate, $m$ refers to the size of the training dataset, and $D$ is the dimension of the data.
\end{theorem}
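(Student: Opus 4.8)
The plan is to follow the coupling-based stability analysis of SGD introduced by Hardt, Recht and Singer, adapted to the quantum-comb representation of the QNN output. Uniform stability asks us to control $|\EE_\cA[\ell(\cA_S, z)] - \EE_\cA[\ell(\cA_{S^i}, z)]|$ uniformly over $S$, $z$ and the replaced coordinate $i$. Since $\ell$ is assumed $\cC_1$-Lipschitz in its first argument, it suffices to bound $\EE_\cA|f(\bm\theta_{S,\cT}, \bm x, \cM) - f(\bm\theta_{S^i,\cT}, \bm x, \cM)|$, and because $f$ is itself Lipschitz in the parameters this reduces to controlling the parameter divergence $\delta_\cT := \EE_\cA\|\bm\theta_{S,\cT} - \bm\theta_{S^i,\cT}\|_2$ after $\cT$ SGD steps. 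First I would couple the two runs so that at every iteration they draw the same index $j_t \in [m]$.

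Next I would set up the one-step recursion for $\delta_t$ by conditioning on $j_t$. With probability $1 - 1/m$ the drawn index avoids the replaced coordinate, so both trajectories take a gradient step on the identical loss term; smoothness of $\ell \circ f$ (constant $\cC_2$ together with a Hessian bound on $f$) makes this update at most $(1 + \eta \tilde\beta)$-expansive, where $\tilde\beta$ is the effective smoothness. With the remaining probability $1/m$ the two updates use different data, contributing an additive bump of at most $2\eta G$, where $G$ upper-bounds the per-example gradient norm. Taking expectations yields
\begin{equation}
\delta_{t+1} \leq (1 + \eta\tilde\beta)\,\delta_t + \frac{2\eta G}{m}, \qquad \delta_0 = 0,
\end{equation}
whose solution is the geometric sum $\delta_\cT \leq \frac{2\eta G}{m}\sum_{t=0}^{\cT-1}(1+\eta\tilde\beta)^t$. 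This is exactly where the factor $1/m$ and the $\cT$-th power appear, and feeding $\delta_\cT$ through the two Lipschitz reductions of the previous paragraph produces the claimed shape $\tfrac{1}{m}\cO((\cdots)^\cT)$.

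The genuinely quantum part, and the main obstacle, is obtaining the constants $G$, $\tilde\beta$ and the Lipschitz constant of $f$ in $\bm\theta$ with the stated dependence on $L$, $D$, $K$ and $\|\cM\|_\infty$. Here I would differentiate the Choi/comb expression of the Corollary, $f(\bm\theta, \bm x, \cM) = \tr[\bigotimes_{l}\cJ_U(\bm\theta^{(l)}) \cdot (\rho_{in}^T \ox (\cJ_U(\bm x)^{\ox L})^T \ox \cM)]$, which is multilinear in the trainable Choi factors. Each first derivative removes one factor and differentiates it, so $\|\nabla_{\bm\theta} f\|$ is controlled by summing $K$ such terms, each bounded through $\|\cM\|_\infty$ and the unitarity (hence bounded norm) of the encoding blocks $\cJ_U(\bm x)$, while the data dimension $D$ and the $L$ re-uploading slots enter through the $L$-fold encoding and the $D$-dimensional parameterization of $U(\bm x)$; the analogous second-derivative estimate supplies $\tilde\beta = \cO(K\|\cM\|_\infty)$. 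The delicate step is keeping these operator-norm bounds clean under the link product, since a naive estimate would blow up with the tensor dimension; exploiting that each Choi factor of a unitary has unit-order norm is what collapses the bound to the compact $\frac{LD\|\cM\|_\infty}{m}\cO((\eta K\|\cM\|_\infty)^\cT)$ form. Finally I would assemble the gradient bound $G = \cC_1\,\|\nabla_{\bm\theta}f\|$, the smoothness constant $\tilde\beta$, and the output Lipschitz factor $LD\|\cM\|_\infty$ into the recursion solution to conclude.
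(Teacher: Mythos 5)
Your proposal follows essentially the same route as the paper: the Hardt--Recht--Singer coupling of the two SGD trajectories, the two-case ($1-1/m$ vs.\ $1/m$) growth recursion solved as a geometric sum, and Lipschitz/smoothness constants for $f$ extracted from the Choi--comb representation via the unit-order norm of each unitary's Choi factor. The only notable divergences are cosmetic: the paper obtains the effective smoothness $\cO(K\|\cM\|_\infty)$ by rewriting $\partial_j f$ through the parameter-shift rule and reusing the first-order bound $\|U(\bm\alpha)-U(\bm\beta)\|_\infty\le\sum_k|\alpha^{(k)}-\beta^{(k)}|$ rather than a Hessian estimate, and the $LD$ prefactor arises there from the data-difference term $\sum_{k=1}^{LD}|\Delta x_k|$ in the different-sample branch rather than from an ``output Lipschitz factor'' of $f$ in $\bm\theta$.
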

\textit{Proof Sketch. } A sketch version of the proof is as follows, with the details in Appendix~\ref{appendix:main_theory}. In order to prove this theorem, we analyze the output of the QNNs when evaluated on two datasets, $S$ and $S^i$, which differ by a single sample. Given that the loss function is Lipschitz continuous for each example $z_i$, we leverage the linearity of the expectation and the proposition of the difference in the output function (cf. Appendix~\ref{appendix:main_theory}, Lemma~\ref{lemma:function_output_diff}), we have,
\begin{equation}\label{eq:difference_out_func}
\begin{aligned}
    |\mathbb{E}_{\text{SGD}}[\ell(\cA_S, z) - \ell(\cA_{S^i}, z)]| & \leq \cC_1 \mathbb{E}_{\text{SGD}} [|f(\bm{\theta}_{S, t}, \bm x, \cM) - f(\bm{\theta}_{S^i, t}, \bm x, \cM)|] \\
    & \leq 2 \cC_1 \|\cM\|_{\infty}  \sum_{k=1}^{K} \mathbb{E}_{\text{SGD}} [|\theta^{(k)}_{S, t} - \theta^{(k)}_{S^i, t}|].
\end{aligned}
\end{equation}
Therefore, it is sufficient to analyze how the parameters $\theta_{S, t}$ and $\theta_{S^i, t}$ diverge and bound the change in parameters recursively in the function of iteration $t$.

Based on the training process of SGD, there are two cases to consider the change of parameters.  The first case is that SGD selects the example at step $t$ that is identical in $S$ and $S^i$ with probability $(m-1)/m$. Since the parameters $\theta_S$ and $\theta_{S^i}$ may differ and the gradient will also differ, we provide an upper bound for the loss-derivative function,
\begin{equation}
    \left|\frac{\partial}{\partial \theta^{(j)}_{S, t}}\ell \left(f(\bm{\theta}_{S, t}, \bm x, \cM),y \right)- \frac{\partial}{\partial \theta^{(j)}_{S^i, t}}\ell \left( f(\bm{\theta}_{S^i, t}, \bm x, \cM) ,y \right)\right| \leq 2\mathcal{C}_2\Vert \cM \Vert_\infty\sum_{k=1}^K |\theta^{(k)}_{S, t} - \theta^{(k)}_{S^i, t}|.
\end{equation}
The derivation is based on the parameter change bound (cf. Appendix~\ref{appendix:main_theory}, Lemma~\ref{lem:param_shift_1}) and a full derivation is in Lemma~\ref{lem:param_shift_1}.

The other case is that SGD selects one example to update in which $S$ and $S^i$ differ and it happens with probability $1/m$. Following the same logic, one can similarly derive the difference loss-derivative function with respect to the different samples as follows,
\begin{equation*}
   \left|\frac{\partial}{\partial \theta^{(j)}_{S, t}}\ell \left(f(\bm{\theta}_{S, t}, \bm x, \cM),y \right)- \frac{\partial}{\partial \theta^{(j)}_{S^i, t}}\ell \left( f(\bm{\theta}_{S^i, t}, \bm x', \cM) ,y' \right)\right| \leq 2\mathcal{C}_2\Vert \cM \Vert_\infty (\sum_{k=1}^K |\Delta \theta^{(k)}|+\sum_{j=1}^{LD} |\Delta x^{(j)}| ),
\end{equation*}
where D is the data dimension, and we adopt the short notations $|\Delta \theta^{(k)}| = |\theta^{(k)}_{S, t} - \theta^{(k)}_{S^i, t}|$, $|\Delta x^{(j)}| = x^{(j)} - x'^{(j)}|$, respectively. A full derivation is in Lemma~\ref{lem:param_shift_2}. Finally, by taking the probability into consideration and plugging the above results into~\eqref{eq:difference_out_func}, one arrives at~\eqref{eq:stability_eq}. $\quad\blacksquare$

Recall that an algorithm is considered stable when the value of $\beta_m$ diminishes in proportion to $1/m$~\cite{bousquet2002stability}. Accordingly, Theorem~\ref{thm:stability_bound} explicitly demonstrates that QNNs trained using SGD algorithms exhibit uniform stability. This is characterized by the bounded maximum change in the output function in response to altering a single training example, applicable uniformly across all possible datasets, with the bound scaling as $1/m$. In Section~\ref{subsec:implications}, we will further explore the implications of the stability with relation to the generalization error.

\subsection{Generalization Error Bound and its Implications}\label{subsec:implications}

The generalization error essentially measures the difference between a model’s accuracy on training data and its expected accuracy on new data, serving as a key indicator of the model's performance. Here, we use uniform stability to derive the \textit{generalization bound} of QNNs. Especially, we apply Theorem~\ref{thm:stability_bound} to obtain the following corollary, with the proof detailed in Appendix~\ref{appendix:main_theory}.
\begin{corollary}[SGD-dependent Generalization Gap]\label{cor:stability_generalization}
    Assume the loss function $\ell$ is Lipschitz continuous and smooth. Consider a learning algorithm $\cA_S$ that uses the data re-uploading QNN, trained on the dataset $S$ using stochastic gradient descent optimization algorithm over $\cT$ iterations. Then, the expected generalization error of $\cA_S$ is bounded as follows, holding with probability at least $1-\delta$ for $\delta \in (0,1)$,
\begin{equation}
\begin{aligned}
    \mathbb{E}_{\text{SGD}}[R(\cA_S) - \hat{R}(\cA_S)] \leq &\frac{ LD \| \cM\|_{\infty} }{m}\cO\left((\eta  K \Vert \cM \Vert_\infty)^\cT\right) \\
    &\quad\quad +\left(LD \| \cM\|_{\infty}\cO\left((\eta K \Vert \cM \Vert_\infty)^\cT\right)+M\right)\sqrt{\frac{\log\frac{1}{\delta}}{2m}},
\end{aligned}
\end{equation}
where $K$ denotes the number of trainable parameters in the model, $\cM$ is the selected measurement operator, $\eta$ is the learning rate, $m$ refers to the size of the training dataset, $D$ is the dimension of data and $M$ is a constant depending on the loss function.
\end{corollary}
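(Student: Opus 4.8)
The plan is to derive this corollary as a direct consequence of Theorem~\ref{thm:stability_bound} combined with the classical ``stability implies generalization'' machinery of Bousquet and Elisseeff~\cite{bousquet2002stability}, adapted to the randomized (SGD) setting following~\cite{elisseeff2005stability, hardt2016train}. The target inequality has the familiar two-term shape of a uniform-stability bound: a deterministic ``bias'' term of order $2\beta_m$, and a high-probability ``concentration'' term of order $(4m\beta_m + M)\sqrt{\log(1/\delta)/(2m)}$ coming from a McDiarmid argument. So the entire task reduces to (i) supplying the generic stability-to-generalization theorem, and (ii) substituting the explicit $\beta_m$ from Theorem~\ref{thm:stability_bound}, after which the stated bound falls out by absorbing universal constants into the $\cO(\cdot)$ notation.

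Concretely, I would proceed in the following order. First, I would record that under the Lipschitz-continuity and smoothness assumptions (together with the bounded measurement operator and compact encoding domain) the loss is uniformly bounded by a constant $M$ depending only on $\ell$, $\cM$, and the data range; this is exactly the hypothesis needed to run McDiarmid. Second, I would establish the in-expectation generalization bound $\mathbb{E}_{S}\,\mathbb{E}_{\text{SGD}}[R(\cA_S) - \hat{R}(\cA_S)] \le 2\beta_m$, which is the standard consequence of Definition~\ref{def:unifrom_stability}: replacing one sample changes any single summand of the empirical risk in expectation by at most $2\beta_m$, and symmetry over the random draw equates the expected empirical and population risks up to this slack. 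Third, I would define the scalar function $g(S) := \mathbb{E}_{\text{SGD}}[R(\cA_S) - \hat{R}(\cA_S)]$ on the product sample space and verify that it satisfies a bounded-difference (McDiarmid) property: replacing the $i$-th coordinate changes $g$ by at most $4\beta_m + M/m$, where the $4\beta_m$ accounts for the stability-driven change in the predictor and the $M/m$ accounts for the direct change of the replaced summand in $\hat{R}$. Fourth, I would apply McDiarmid's inequality to $g(S)$ to obtain, with probability at least $1-\delta$, a deviation of $g(S)$ from its expectation of at most $(4m\beta_m + M)\sqrt{\log(1/\delta)/(2m)}$. Finally, combining the second and fourth steps and inserting $\beta_m \le \tfrac{LD\|\cM\|_{\infty}}{m}\,\cO\!\big((\eta K\|\cM\|_\infty)^{\cT}\big)$ yields the claimed expression, since $m\beta_m$ collapses the $1/m$ and the constants $2$ and $4$ are swallowed by $\cO(\cdot)$.

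The main obstacle I anticipate is the bounded-difference verification in the randomized setting, i.e. step three: one must be careful that the outer expectation over the SGD randomness commutes with the sample replacement and that uniform stability (which controls $\mathbb{E}_{\text{SGD}}$ of the loss difference) is precisely what is needed to bound the increment of $g$, rather than a per-realization statement. Tracking the bookkeeping here — separating the $4\beta_m$ contribution of the learned model from the $M/m$ contribution of the swapped empirical term, and confirming that the loss bound $M$ is genuinely available from the Lipschitz assumption over the relevant compact domain — is the delicate part; the concentration inequality itself and the final substitution are then routine.
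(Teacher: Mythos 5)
Your proposal is correct and follows essentially the same route as the paper: the paper's own proof simply invokes the generic stability-to-generalization bound $\mathbb{E}_{\cA}[\hat{\cR}(\cA_S) - \cR(\cA_S)] \leq 2\beta_m + (4m\beta_m + M)\sqrt{\log(1/\delta)/(2m)}$ (Theorem~\ref{theorem:gene_gap_stability}, quoted from~\cite{elisseeff2005stability}) and substitutes the $\beta_m$ from Theorem~\ref{thm:stability_bound}. Your additional sketch of the McDiarmid bounded-difference argument behind that generic theorem is sound but not spelled out in the paper, which treats it as a black box.
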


\vspace{-0.8em}
Corollary~\ref{cor:stability_generalization} directly connects the generalization error with respect to the dimension of datasets, the number of layers, and the number of trainable parameters of data re-uploading QNN. \textcolor{black}{ We also consider the depolarizing noise effect on the generalization error, which we remain in the Appendix~\ref{appendix:depo}. 
Corollary~\ref{cor:stability_generalization}} also provides additional implications for choosing a suitable learning rate and the number of iterations for training with comparable performance as following.

\paragraph{Vanishing on the number of samples.} Our generalization bound demonstrates a $\cO(\frac{1}{\sqrt{m}})$ scaling with the number of training samples $m$, highlighting that an increase in $m$ directly enhances the generalization performance. It is important to note that for the generalization bound to be meaningful, it must converge to zero as 
$m$ in the limit $m \to \infty$ and this convergence is contingent upon $\beta_m$ decaying at a rate faster than $\cO(\frac{1}{\sqrt{m}})$. Hence, our generalization bound is a non-trivial bound and its scaling in relation to the number of samples aligns with current literature. 

\paragraph{Trade-off between expressivity and generalization.} On the design of QNNs' architecture, our findings also reveal that the generalization bound demonstrates a linear dependence on both the number of data re-uploading times $L$, and the dimension of the data $D$. This linear relationship indicates that as the number of times data $L$ is re-uploaded increases, or as the dimension of the data $D$ grows, there is a corresponding increase in the bound, suggesting that more complex data or more frequent re-uploading could potentially lead to larger errors on unseen data. However, the increase in the number of data re-uploading times $L$ implies a higher expressivity of QNNs~\cite{PerezSalinas2020datareuploading, yu2022power, manzano2023parametrized}. This directly comes with a trade-off between the expressivity and generalization of data re-uploading QNNs, particularly when dealing with limited training data, which shows the analogy with the bias-variance trade-off in classical neural networks~\cite{geman1992neural, hastie2009elements}. It is noted that while previous work~\cite{du2021A,qi2023theoretical} considers the number of parameters as a measure of expressivity, \textcolor{black}{bound which focus on the data re-uploading times $L$ poses a more direct form of expressivity. Below, we demonstrate how the combination of explicit expressivity and optimizer parameters provides additional insights into understanding the performance of QNN models.}

\paragraph{Stable training.} On the training of QNNs, our generalization bound also have an exponential dependence $\cT$ on the operator norm of the measurement operator $\| \cM \|_{\infty}$, the learning rate $\eta$ and the number of parameters $K$. It is imperative to carefully choose these parameters such that the term is normalized to be less than 1, i.e. $\eta K \|\cM\|_{\infty} < 1$ to maintain a low generalization error. In practice, the operator norm of measurement practically is bounded by 1 such that $0 \leq \|M\|_{\infty} \leq 1$ and it is normally chosen as Pauli strings, i.e. $\|X\|_{\infty} = \|Y\|_{\infty} = \|Z\|_{\infty} = 1$. Then, balancing the number of parameters and the learning rate is critical. A larger $K$ can be counterbalanced by a smaller $\eta$ to ensure that each step in the learning process is small enough to prevent instabilities that could arise from complex models. To optimize generalization performance with a fixed number of parameters or learning rate, our bound suggests configuring the learning rate as $\cO(1/K)$, where $K$ is the fixed number of parameters, or setting the number of parameters as $\cO(1/\eta)$, where $\eta$ is the fixed learning rate. This careful tuning helps to mitigate the risk of overfitting by allowing the model to explore the parameter space more thoroughly and settle into a stable configuration that generalizes well. \textcolor{black}{By incorporating expressivity, our generalization bound unifies the learning and training phases, offering more practical insights for the design and optimization of QNNs.}

\section{Numerical Simulations}
\label{sec:num_simulations}

Previous sections theoretically characterize the generalization of QNNs via stability using SGD optimizers. In this section, we verify these results by conducting numerical experiments on the perspectives of varies in expressivity and the status of stable training. All evaluations are performed on a desktop with Intel Core i5 CPU (1.4 GHz and 8GB RAM) using python 3.8.

\begin{figure}[!htbp]
\centering
\includegraphics[width=0.8\linewidth]{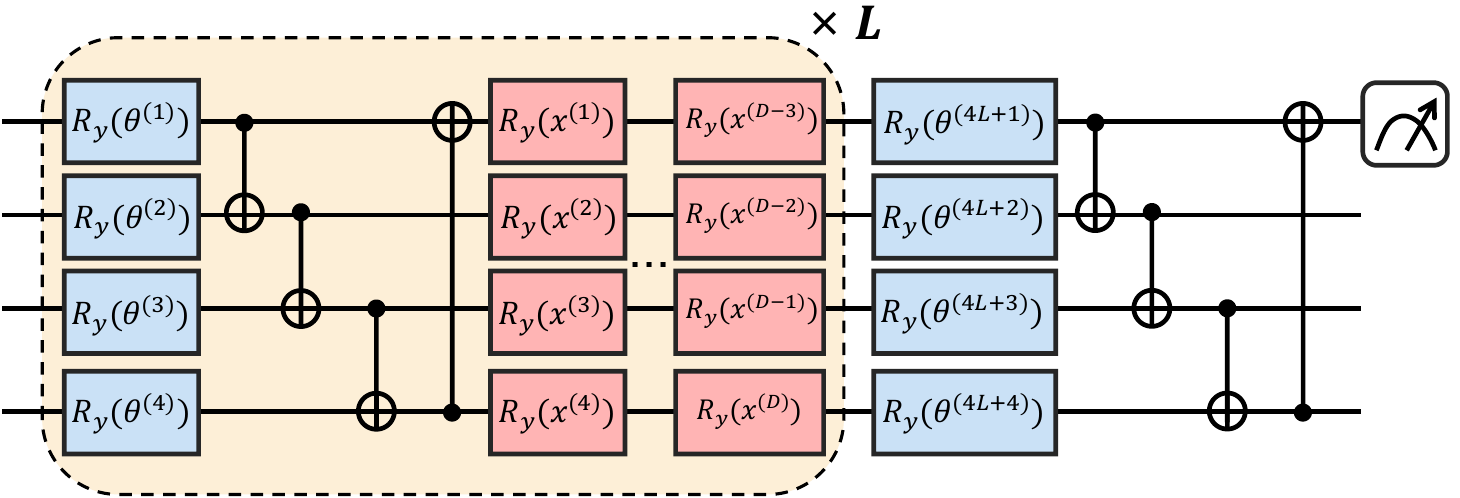}
\caption{\textcolor{black}{This ansatz template describes a quantum circuit with $N$ qubits and $L$ layers. Each layer is structured into two distinct blocks: a trainable block and an encoding block, represented in blue and red, respectively. The trainable blocks consist of repeated $R_y$ rotation gates and CX gates. In contrast, each encoding block comprises an $n$-tensor product of $R_y(x^{(i)})$ gates, designed to encode an $D$-dimensional classical data vector $\bm x = (x^{(1)}, \cdots, x^{(D)})^T$.}}
\label{fig:ansatz_template}
\end{figure}

\paragraph{Simulation Setups.}  Following the seminal QML benchmark study~\cite{bowles2024better}, we choose to use three public datasets: Breast Cancer~\cite{breastcancer}, MNIST~\cite{lecun2010mnist}, and Fashion MNIST~\cite{Fashion-MNIST} to examine the generalization ability via SGD optimizer on data re-uploading QNN. The Breast Cancer dataset has 569 examples described by 30 features. 
(Fashion) MNIST dataset includes a training set of 60,000 examples and a test set of 10,000 examples. Each example is a 28x28 grayscale image, associated with a label from 10 classes. For simplicity, each image is reduced to 4x4 for QNNs. We conduct the binary classification tasks on diagnosis B/M, digit 0/1, and class T-shirt/Trouser for the three datasets. The training and testing examples are randomly sampled and the size of training examples are 114, 500, 500, while the size of testing examples are 455, 2000, 2000, respectively.

For the architecture of QNNs, we use data re-uploading QNNs with the ansatz in Figure.~\ref{fig:ansatz_template}, followed by a Pauli-Z measurement on the first qubit. 
Adopting the most commonly used ansatz~\cite{Kandala2017hardwareefficient, Nakaji2021, sim2019expressibility}, tunable gates are depicted in blue, where the ansatz is considered to be single-qubit rotations with two-qubit CX gates. Also, all trainable parameters are initialized within $[ 0, 2\pi]$. The classical information $\bm{x}$ is encoded via angle encoding as depicted in red. \textcolor{black}{We repeated the experiments with varying $L$ settings 5 times and varying $\eta$ settings 10 times to obtain statistical results. The error bars represent the standard deviation.}

\begin{figure}[!htbp]
\centering
\includegraphics[width=1.0\linewidth]{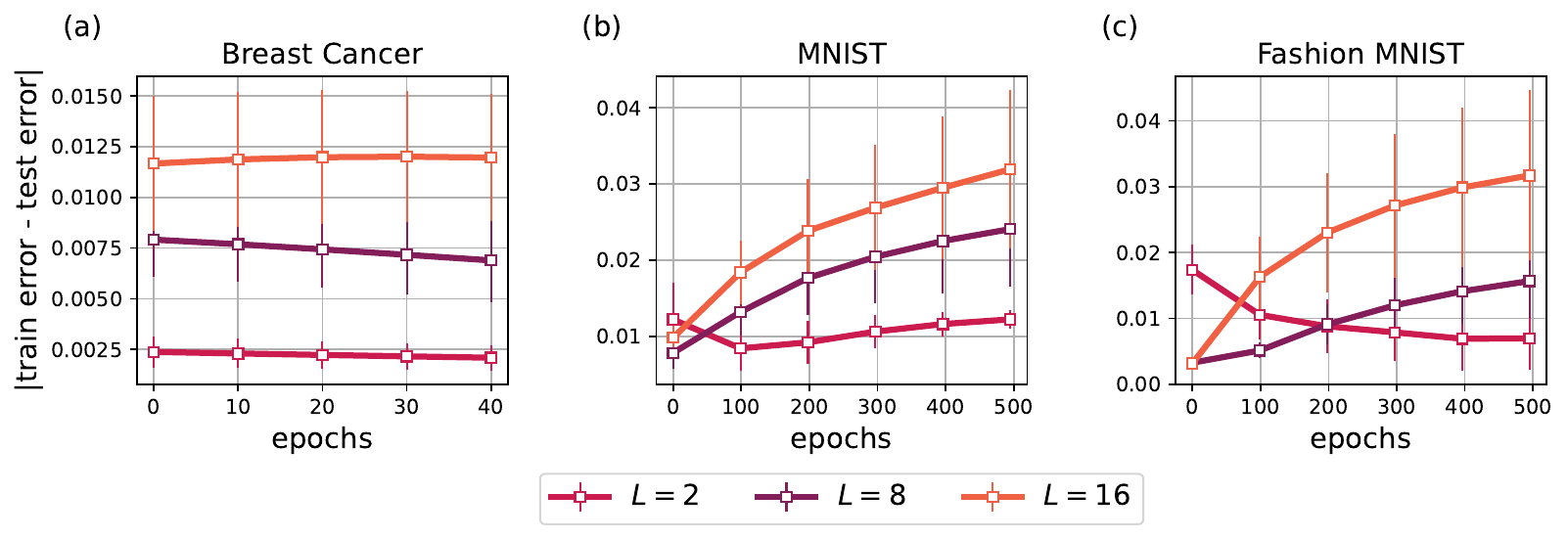} 
\caption{Generalization gap estimation with the varies on the number of data re-uploading change $L \in [2, 8, 16]$ for three datasets with learning rate $\eta = 0.01$, with the error bar representing the statistical uncertainty in experiments. 
Direct comparisons reveal that the generalization gap is smallest when $L=2$, whereas, for $L=8$ or $L=16$, the gap continues to increase. This observation aligns with our theorem, which suggests that increasing the model's expressivity does not necessarily lead to a convergence in the generalization error. }
\label{fig: L change} 
\end{figure}

\paragraph{Varies on data re-uploading times.} 
We first study the generalization gap between training and testing loss in binary classification by varying the number of data re-uploading times $L \in [2, 8, 16]$ of the mentioned three datasets with the learning rate $\eta = 0.01$. It is clear to be seen in Figure~\ref{fig: L change} that as the number of data re-uploading times increases, the generalization gap also increases.  Especially with $L=2$, the generalization bound could achieve relatively better results. This echoes with Theorem~\ref{cor:stability_generalization} that the increase of $K$ and $L$ will increase the generalization gap and it will not be guaranteed to be converged due to the exponential dependence on the iterations.

\begin{figure}[!htbp]
\centering
\includegraphics[width=1.0\linewidth]{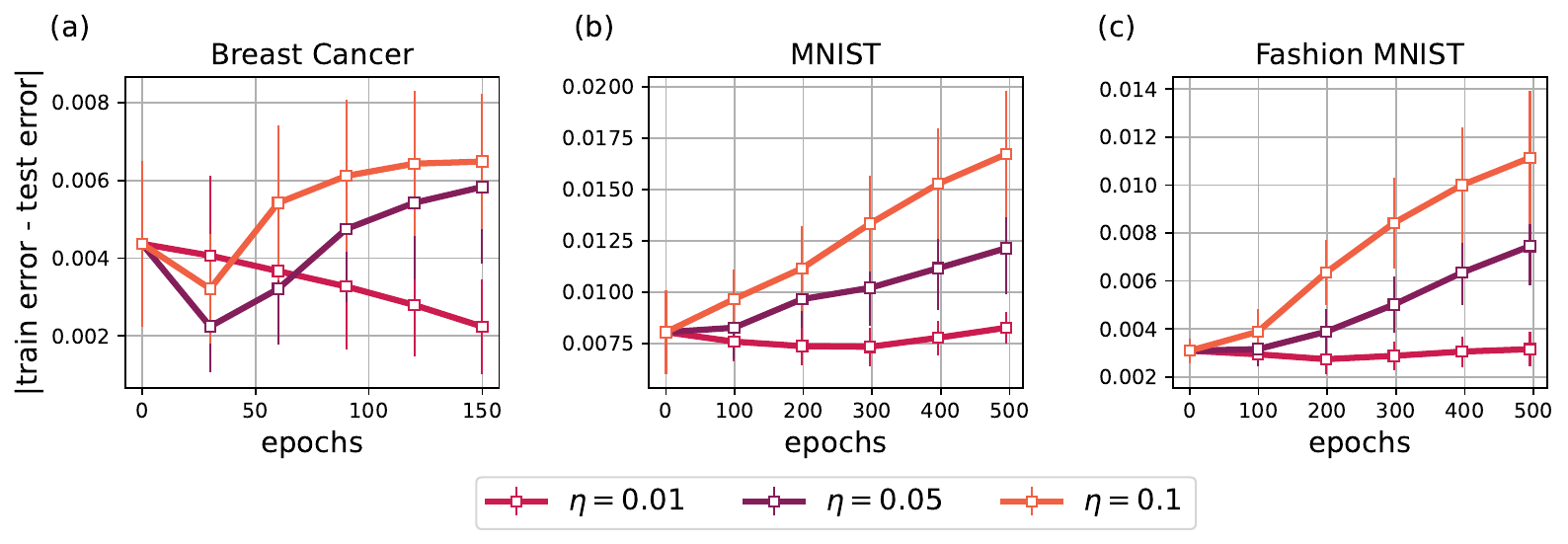} 
\caption{\textcolor{black}{Generalization gap estimation with the varies on the learning rate $\eta \in [0.01, 0.05, 0.1]$ for three datasets, with the error bar representing the statistical uncertainty in experiments. (a), (b), and (c) show the loss of various sampled classical datasets. This figure illustrates that increasing the learning rate does not ensure convergence of the generalization gap. It also demonstrates that the generalization gap tends to converge when the learning rate is chosen to be approximately inversely proportional to the number of parameters.}}
\label{fig: lr change} 
\end{figure}

\paragraph{Varies on Learning Rate.} 
We then investigate the generalization gap of stable training by varying the magnitude of learning rate $\eta \in [0.1, 0.05, 0.01]$ of the mentioned three datasets. The number of data re-uploading times is set to be $L=8$. It is depicted in Figure~\ref{fig: lr change} that with the learning rate $\eta$ increases, the generalization gap is not guaranteed to converge. With the learning rate $\eta = 0.01$, the generalization gap achieves relatively better results as the scaling of the number of parameters closely approximates $~\cO(1/K)$. However, with higher learning rates such as $\eta = 0.05$ and $\eta = 0.1$, the exponential term is not bounded, implying that the generalization gap may not converge. This observation reinforces the insights from Theorem~\ref{cor:stability_generalization}, highlighting the complex relationship between learning rate, parameter scaling, and generalization performance. 

\paragraph{ \textcolor{black}{Additional Experiments and Remarks}. } For completeness, we have also conducted additional experiments on the number of training samples, detailed in Appendix~\ref{sec:additional_experiments} Figure~\ref{fig: m change}. While previous simulations focused on the generalization error gap between training and testing accuracy, we also provide detailed training and testing accuracy results. These results presented in a similar setting to those in Figure~\ref{fig: L change} and~\ref{fig: lr change} and can be founded in Appendix figure~\ref{fig: rebuttal_Lr_change} and~\ref{fig: rebuttal_L_change} respectively. \textcolor{black}{We also remark that the generalization bound tends to become too loose as training progresses, particularly in cases where the model's hyperparameters and learning rate are not properly chosen. Observations from several simulations show that the generalization gap converges to a certain value, but the bound fails to capture this phenomenon and instead continues to increase as training progresses.}

\section{Conclusion and Future Work}\label{sec:conlusion}

In this paper, we have initiated steps towards a deeper theoretical understanding of quantum neural networks by the lens of stability. We have demonstrated that the generalization of data re-uploading QNNs via stability is contingent upon the number of trainable parameters, data re-uploading times, data dimension, and optimizer dependent parameters. While the overall decay of the generalization error in relation to the sample size aligns with results from previous studies, our result provides a distinct perspective on the stable training concept, specifically regarding the impact of learning rate $\eta$, parameter count $K$ and measurement operator $\|M\|_{\infty}$ on training over $\cT$ iterations, given by the exponential term $\cO(\eta K\|M\|_{\infty})^\cT$. This offers new insights into training QNNs that the learning rate and the number of trainable gates should be chosen to be inversely proportional to ensure stable training and minimize generalization error.

From the technical perspective, we have utilized the quantum information theoretic methods such as quantum comb architecture and Choi–Jamiołkowski isomorphism to analyze the performance of QNNs. This theoretical framework offers numerous tools for analysing the stability of data re-uploading models. Future research could explore a broader form of the quantum comb and employ semidefinite programming to identify optimal QNNs~\cite{quintino2019reversing}. This approach may offer deeper insights into the fundamental limits that QNNs can achieve in terms of generalization errors. It would also be intriguing to extend our analysis to include other well-known optimizers such as RMSProp and Adam~\cite{ruder2016overview}. \textcolor{black}{We believe that our findings deepen the understanding of QNNs' learnability~\cite{anschuetz2024unifiedtheoryquantumneural} by considering both design and training processes}, paving the way for the implementation of powerful QNNs across various machine learning tasks.

\section{Acknowledgment}
The authors would like to thank Zhan Yu and Hao-Kai Zhang for helpful comments. Chenghong Zhu and Hongshun Yao contributed equally to this work. This work was partially supported by the National Key R\&D Program of China (Grant No.~2024YFE0102500), the National Natural Science Foundation of China (Grant. No.~12447107), the Guangdong Provincial Quantum Science Strategic Initiative (Grant No.~GDZX2403008, GDZX2403001), the Guangdong Provincial Key Lab of Integrated Communication, Sensing and Computation for Ubiquitous Internet of Things (Grant No.~2023B1212010007), the Quantum Science Center of Guangdong-Hong Kong-Macao Greater Bay Area, and the Education Bureau of Guangzhou Municipality.

\newpage
\bibliography{ref}
\bibliographystyle{unsrt}

\newpage
\begin{center}
\textbf{\large Appendix}
\end{center}

\renewcommand{\thetheorem}{S\arabic{theorem}}
\renewcommand{\theequation}{S\arabic{equation}}
\renewcommand{\theproposition}{S\arabic{proposition}}
\renewcommand{\thelemma}{S\arabic{lemma}}
\renewcommand{\thecorollary}{S\arabic{corollary}}
\renewcommand{\thefigure}{S\arabic{figure}}
\setcounter{equation}{0}
\setcounter{table}{0}
\setcounter{proposition}{0}
\setcounter{figure}{0}

In the table below, we summarize the notations used throughout the paper:
\setlength\extrarowheight{2.2pt}
\begin{table}[H]
\centering
\begin{tabular}{l|l}
\toprule[2pt]
Symbol & Definition\\
\hline
{$\cH_A,{\cH_{A^\prime}},\cdots$} & {Hilbert space of quantum system $A,A^\prime,\cdots$} \\
$\cH_A^{\otimes n},\cdots $& {Hilbert space of quantum system $A^n,\cdots$}\\
{$\text{Lin}(\cH_A)$} & {Set of linear operators acting on $\cH_A$} \\
$\id$ & {Identity map on the space $\cD(\cH_A)$}\\
$I$ & Identity operator on a suitable space\\
$\cJ_{\cN}$ & Choi representation of the map $\cN$\\
$(\cdot)^T$ & Transpose of an operator\\
$v^{(j)}$ & The $j$-th element of vector $\bm{v}$\\
$\bm{\theta}^{(j)}$ & The $j$-th batch taken from the parameter set $\bm{\theta}$\\
$m$ & Numbers of data\\
$D$ & Dimension of data\\
$K$ & Numbers of parameters\\
$L$ & Layers of Quantum Neural Network\\
$\cT$ & Iterations\\
$N$ & Number of qubits\\
$\bm{\theta}_{S, t}$ & Parameters trained from training dataset $S$ after $t$ iterations\\
$\cal{M}$ & Measurement operator\\
$\eta$ & Learning rate\\
\bottomrule[2pt]  
\end{tabular}
\caption{\small Overview of notations.}
\label{tab: state version}
\end{table}

\section{Auxiliary Lemmas}
In this section, we introduce several auxiliary concepts from quantum information theory on quantum circuits or parameterized quantum gates and fundamental definitions from statistical learning theory that are essential for understanding the main proof.

\paragraph{Lemmas on parameterized quantum gates.} Parameterized quantum circuits (PQCs) are a crucial framework in quantum computing, particularly for variational quantum algorithms. These circuits consist of a sequence of parameterized unitary $U(\bm{\theta})$ can be adjusted, enabling adaptation to specific tasks or optimization objectives. Common parameterized gates include rotation gates like $ R_x(\theta) $, $R_y(\theta)$, and  $R_z(\theta)$ , which perform rotations on qubits, alongside entangling gates such as the CNOT gate. The parameters are often optimized using classical algorithms to minimize a cost function related to tasks like state preparation or energy estimation. Here we introduce a bound on the distance between two $U(\bm{\theta})$ with different parameter settings.

\begin{lemma}[Bound in parameters change]\label{lem:parameter_bound}
Suppose a parameterized unitary $U(\bm{\theta}):= \prod_{k=1}^{K}U_k e^{-i\alpha^{(k)}P_k/2} V_{K+1}$, we have the upper bound on two different parameter sets $\bm{\alpha}, \bm{\beta} \in \mathbb{R}^{K}$
\begin{equation}\label{ieq:parameter_bound}
    \begin{aligned}
        \Vert U(\bm{\alpha})-U(\bm{\beta})\Vert_\infty\leq \sum_{k=1}^K| \alpha^{(k)}  - \beta^{(k)} |,
    \end{aligned}
\end{equation}
where $U_k$ and $V_{K+1}$ are fixed quantum gates, $P_k\in\{X,Y,Z\}$ denotes a single-qubit Pauli gate, and identity gates are omitted.
\end{lemma}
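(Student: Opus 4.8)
The plan is to prove the inequality by a telescoping \emph{hybrid} argument that switches the parameters one rotation at a time, thereby reducing the whole estimate to a single-gate bound. Writing $U(\bm\alpha)=\left(\prod_{k=1}^{K}U_k e^{-i\alpha^{(k)}P_k/2}\right)V_{K+1}$ and likewise for $\bm\beta$, I would first introduce, for $0\le j\le K$, the interpolating unitaries
\[
W_j:=\left(\prod_{k=1}^{j}U_k e^{-i\alpha^{(k)}P_k/2}\right)\left(\prod_{k=j+1}^{K}U_k e^{-i\beta^{(k)}P_k/2}\right)V_{K+1},
\]
so that $W_K=U(\bm\alpha)$ and $W_0=U(\bm\beta)$. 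The difference then telescopes, $U(\bm\alpha)-U(\bm\beta)=\sum_{k=1}^{K}(W_k-W_{k-1})$, and the triangle inequality for $\|\cdot\|_\infty$ reduces the problem to bounding each consecutive gap $\|W_k-W_{k-1}\|_\infty$.

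Next I would observe that $W_k$ and $W_{k-1}$ are identical except in their $k$-th rotation factor: both can be written as $L_k\,U_k\,(\,\cdot\,)\,R_k$ with the common unitary prefix $L_k=\prod_{j<k}U_j e^{-i\alpha^{(j)}P_j/2}$ and suffix $R_k=\big(\prod_{j>k}U_j e^{-i\beta^{(j)}P_j/2}\big)V_{K+1}$, where $W_k$ places $e^{-i\alpha^{(k)}P_k/2}$ in the slot and $W_{k-1}$ places $e^{-i\beta^{(k)}P_k/2}$. Hence
\[
W_k-W_{k-1}=L_k U_k\big(e^{-i\alpha^{(k)}P_k/2}-e^{-i\beta^{(k)}P_k/2}\big)R_k,
\]
and since $L_kU_k$ and $R_k$ are unitary, unitary invariance of the operator norm gives $\|W_k-W_{k-1}\|_\infty=\|e^{-i\alpha^{(k)}P_k/2}-e^{-i\beta^{(k)}P_k/2}\|_\infty$.

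Finally I would bound this single-gate difference. Using the Duhamel identity $e^{-iH_1}-e^{-iH_2}=-i\int_0^1 e^{-isH_1}(H_1-H_2)e^{-i(1-s)H_2}\,ds$ for Hermitian $H_1,H_2$, together with unitary invariance and the integral triangle inequality, one obtains $\|e^{-iH_1}-e^{-iH_2}\|_\infty\le\|H_1-H_2\|_\infty$. Taking $H_1=\alpha^{(k)}P_k/2$, $H_2=\beta^{(k)}P_k/2$ and using $\|P_k\|_\infty=1$ yields $\|e^{-i\alpha^{(k)}P_k/2}-e^{-i\beta^{(k)}P_k/2}\|_\infty\le\tfrac12|\alpha^{(k)}-\beta^{(k)}|\le|\alpha^{(k)}-\beta^{(k)}|$; alternatively, diagonalizing in the $\pm1$ eigenbasis of $P_k$ gives the sharper value $2|\sin(\tfrac{\alpha^{(k)}-\beta^{(k)}}{4})|$. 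Summing over $k$ via the telescoping estimate then produces $\sum_{k=1}^{K}|\alpha^{(k)}-\beta^{(k)}|$, as claimed. The only point requiring care is that each $P_k$ is really a single-qubit Pauli embedded (with identities) into the full $N$-qubit space; but such an embedding leaves the spectrum equal to $\{\pm1\}$, so $\|P_k\|_\infty=1$ still holds and the single-gate estimate is unaffected. I expect this embedding bookkeeping, rather than any genuine analytic difficulty, to be the main (and only mild) obstacle.
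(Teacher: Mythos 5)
Your proposal is correct and follows essentially the same route as the paper: a telescoping (hybrid) decomposition combined with the triangle inequality and unitary invariance of the spectral norm, reducing the claim to a bound on a single rotation factor. The only difference is in that last elementary step, where you invoke the Duhamel identity to get $\|e^{-iH_1}-e^{-iH_2}\|_\infty\le\|H_1-H_2\|_\infty$ (yielding the slightly sharper constant $\tfrac12|\alpha^{(k)}-\beta^{(k)}|$), whereas the paper diagonalizes $P_k$ explicitly and bounds $\sqrt{2-2\cos\bigl[(\alpha^{(k)}-\beta^{(k)})/2\bigr]}$ by $|\alpha^{(k)}-\beta^{(k)}|$; both are valid.
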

\begin{proof}
First, we are going to demonstrate the following inequality:
\begin{equation}\label{ieq:isometric invariance}
    \begin{aligned}
        \Vert U(\bm{\alpha})-U(\bm{\beta})\Vert_\infty\leq\sum_{k=1}^K\left\Vert e^{-i\alpha^{(k)}P_k/2}-e^{-i\beta^{(k)}P_k/2}\right\Vert_\infty.
    \end{aligned}
\end{equation}
By definition, we have 
\begin{equation}
    \begin{aligned}
        \left\Vert U(\bm{\alpha})-U(\bm{\beta})\right\Vert_\infty=\left\Vert\prod_{k=1}^{K}U_k e^{-i\alpha^{(k)}P_k/2} V_{K+1}-\prod_{k=1}^{K}U_k e^{-i\beta^{(k)}P_k/2} V_{K+1}\right\Vert_\infty.
    \end{aligned}
\end{equation}
Without loss of generality, observing the case of $K=2$, that is,
\begin{equation}
    \begin{aligned}
        \left\Vert U(\bm{\alpha})-U(\bm{\beta})\right\Vert_\infty &=\left\Vert U_1e^{-i\alpha^{(1)}P_1/2}U_2e^{-i\alpha^{(2)}P_2/2}V_3-U_1e^{-i\beta^{(1)}P_1/2}U_2e^{-i\beta^{(2)}P_2/2}V_3 \right\Vert_\infty\\
        &\leq \left\Vert U_1e^{-i\alpha^{(1)}P_1/2}U_2e^{-i\alpha^{(2)}P_2/2}V_3-U_1e^{-i\beta^{(1)}P_1/2}U_2e^{-i\alpha^{(2)}P_2/2}V_3\right\Vert_\infty \\
        &\quad\quad+\left\Vert U_1e^{-i\beta^{(1)}P_1/2}U_2e^{-i\alpha^{(2)}P_2/2}V_3-U_1e^{-i\beta^{(1)}P_1/2}U_2e^{-i\beta^{(2)}P_2/2}V_3 \right\Vert_\infty\\
        &=\left\Vert e^{-i\alpha^{(1)}P_1/2}-e^{-i\beta^{(1)}P_1/2}\right\Vert_\infty + \left\Vert e^{-i\alpha^{(2)}P_2/2} - e^{-i\beta^{(2)}P_2/2}\right\Vert_\infty,
    \end{aligned}
\end{equation}
where the inequality follows the triangle inequality, and the last equality refers to the isometric invariance of the spectral norm. Recursively, in the general case of $K \geq 2$, the inequality~\eqref{ieq:isometric invariance} holds.

Additionally, the sub-term $\left\Vert e^{-i\alpha^{(k)}P_k/2}-e^{-i\beta^{(k)}P_k/2}\right\Vert_\infty$ in~\eqref{ieq:isometric invariance} can be further bounded
\begin{equation}
    \begin{aligned}
      \left\Vert e^{-i\alpha^{(k)}P_k/2}-e^{-i\beta^{(k)}P_k/2}\right\Vert_\infty&= \left\Vert I-e^{i(\alpha^{(k)} - \beta^{(k)})P_k/2}\right\Vert_\infty\\
        &= \max_k \left|1-e^{i(\alpha^{(k)} - \beta^{(k)}) \cdot \lambda_k(P_k)/2}\right|\\
        &= \max_k \sqrt{2-2\cos\left[(\alpha^{(k)} - \beta^{(k)}) \cdot \lambda_k(P_k) /2\right]}\\
        &= \sqrt{2-2\cos\left[(\alpha^{(k)} - \beta^{(k)}) /2\right]},\\
        &\leq \left|\alpha^{(k)} - \beta^{(k)}\right|,
    \end{aligned}
\end{equation}
where $\lambda_k(A)$ denotes the $k$-th eigenvalue of a Hermitian operator $A$, the first equation applies the isometric invariance on $ \left\Vert e^{-i\alpha^{(k)}P_k/2}-e^{-i\beta^{(k)}P_k/2}\right\Vert_\infty =  \left\Vert e^{-i\alpha^{(k)}P_k/2} (I-e^{i(\alpha^{(k)} - \beta^{(k)})P_k/2})\right\Vert_\infty$, and the last inequality holds from the fact that $\cos(\theta)\geq1-2\theta^2$.
Finally, plugging this form in the~\eqref{ieq:isometric invariance} yields the result shown in~\eqref{ieq:parameter_bound}, which completes the proof.    
\end{proof}

\paragraph{Definitions and lemmas from statistical learning theory.} Here, we outline the assumptions underlying our work, provide basic definitions of stability, and show its relationship to the generalization gap.

We will assume the loss function $\ell$ is Lipschitz continuous and smooth with constants $\cC_1$ and $\cC_2$ respectively. The definition is as follows,
\begin{definition}
    A loss function $\ell(\cdot)$ is said to be Lipschitz-continuous and smooth, if it satisfies,
\begin{equation}
\begin{aligned}
    & |\ell(f(\cdot), y) - \ell(g(\cdot), y)| \leq \cC_1 |f(\cdot) - g(\cdot)|, \\
    & |\nabla \ell(f(\cdot), y) - \nabla \ell(g(\cdot), y)| \leq \cC_2 | \nabla f(\cdot) - \nabla g(\cdot)|.
\end{aligned}
\end{equation}
\end{definition}
This assumption is extensively used to investigate the performance capabilities of QNNs~\cite{Huang2021, Du2022generalisationbound, McClean_2018, yu2023provable}.  Denoting the dataset with replacement on $i$-th data with $z^{'}:=(\bm x^\prime,y^\prime)$ as $S^i$ for $\bm x^\prime \in\mathbb{R}^D$ and $y^\prime \in \mathbb{R}$, where $S^i:= \{z_1, \cdots z_{i-1}, z^{'}, z_{i+1},\cdots z_m\}$. The uniform stability is defined as follows,
\begin{definition}[Uniform Stability~\cite{bousquet2002stability}]
For a randomized learning algorithm $\cA_S$, it is said to be $\beta_m$-uniformly stable with respect to a loss function $\ell(\cdot)$, if it satisfies,
\begin{equation}
    \sup_{S,z}|\mathbb{E}_{\cA}[\ell(\cA_S, z)] - \mathbb{E}_{\cA}[\ell(\cA_{S^i}, z)]| \leq 2\beta_m.
\end{equation}
\end{definition}
Uniform stability essentially provides an upper bound on the variation in losses resulting from the alteration of a single data sample. A randomized learning algorithm with uniform stability directly yields the following bound on the generalization error,
\begin{theorem}[Generalization Bound based on Stability~\cite{elisseeff2005stability}]\label{theorem:gene_gap_stability}
    A $\beta_m$-uniform stable randomized algorithm with a bounded loss function $0 \leq \ell(\cdot) \leq M$, satisfies the following expected generalization bound with probability at least $1-\delta$ with $\delta \in (0,1)$ over the random draw of $S, z$,
\begin{equation}
    E_{\cA}[\hat{\cR}(\cA_S) - \cR(\cA_S)] \leq 2\beta_m + (4m\beta_m +M) \sqrt{\frac{\log\frac{1}{\delta}}{2m}}.
\end{equation}
\end{theorem}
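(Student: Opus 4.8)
The plan is to combine a bound on the \emph{expected} generalization gap with a concentration argument via the bounded-differences (McDiarmid) inequality, following the classical stability template. The crucial preliminary observation is that, since the loss has already been averaged over the internal randomness of $\cA$, the quantity $\Phi(S) := \mathbb{E}_{\cA}[\cR(\cA_S) - \hat{\cR}_S(\cA_S)]$ is a \emph{deterministic} function of the i.i.d.\ sample $S = (z_1,\dots,z_m)$; this is precisely what legitimizes a bounded-differences argument even though $\cA$ is randomized. The orientation $\hat{\cR}_S - \cR$ appearing in the statement is immaterial, since McDiarmid delivers a two-sided deviation bound.

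First I would bound $\mathbb{E}_S[\Phi(S)]$ by a ghost-sample relabeling argument. Introducing a fresh $z' \sim \mathbb{D}$ independent of $S$, one writes $\mathbb{E}_{S,\cA}[\cR(\cA_S)] = \mathbb{E}_{S,z',\cA}[\ell(\cA_S, z')]$, then swaps the names of $z_i$ and $z'$ to identify this with $\mathbb{E}_{S,z',\cA}[\ell(\cA_{S^i}, z_i)]$, where $S^i$ replaces $z_i$ by $z'$. Averaging over $i$ gives
\[
  \mathbb{E}_{S}[\Phi(S)] = \frac{1}{m}\sum_{i=1}^m \mathbb{E}_{S,z',\cA}\big[\ell(\cA_{S^i}, z_i) - \ell(\cA_S, z_i)\big],
\]
and each summand is at most $2\beta_m$ in absolute value by the definition of uniform stability, so $\mathbb{E}_S[\Phi(S)] \le 2\beta_m$. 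This reproduces the leading term of the bound.

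Next I would verify the bounded-differences condition for $\Phi$. Replacing a single point $z_i$ by $z_i'$ to form $S^i$, I split $\Phi(S) - \Phi(S^i)$ into a population part and an empirical part. The population part $\mathbb{E}_\cA[\cR(\cA_S) - \cR(\cA_{S^i})]$ is bounded by $2\beta_m$ directly from uniform stability. For the empirical part, the $m-1$ summands common to $S$ and $S^i$ contribute at most $\frac{m-1}{m}\cdot 2\beta_m \le 2\beta_m$ (again applying stability term by term), while the single summand whose evaluation point genuinely differs is controlled only by the loss range, giving $M/m$ from $0 \le \ell \le M$. Hence the per-coordinate constant is $c_i \le 4\beta_m + M/m$. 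Feeding $\sum_{i} c_i^2 = m\,(4\beta_m + M/m)^2$ into McDiarmid's inequality and equating the tail probability with $\delta$ yields a deviation $t = (4\beta_m + M/m)\sqrt{m\log(1/\delta)/2}$, which after the identity $(4\beta_m + M/m)\sqrt{m} = (4m\beta_m + M)/\sqrt{m}$ becomes exactly $(4m\beta_m + M)\sqrt{\log(1/\delta)/(2m)}$. Adding the expectation bound gives the claim.

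The main obstacle I anticipate is the careful bookkeeping in the bounded-differences step: one must separate the $m-1$ ``stable'' summands (each shrinking like $\beta_m$) from the one ``unstable'' summand whose only available control is the crude range $M/m$, since lumping them together would spoil the clean $(4m\beta_m + M)$ prefactor. A secondary subtlety, easy to overlook, is justifying that $\Phi$ is a function of $S$ alone before invoking McDiarmid, which is why the expectation over $\cA$ must be taken first.
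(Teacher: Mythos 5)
Your proof is correct, and there is nothing in the paper to compare it against: the paper states this theorem without proof, importing it directly from Elisseeff et al.~\cite{elisseeff2005stability}. Your argument is precisely the standard proof from that stability literature --- a ghost-sample symmetrization bounding the expected gap by the stability constant, followed by McDiarmid's inequality applied to the deterministic function $\Phi(S)=\mathbb{E}_{\cA}[\cR(\cA_S)-\hat{\cR}_S(\cA_S)]$ --- and your bookkeeping recovers the stated constants exactly: you correctly used the paper's convention that uniform stability places $2\beta_m$ (not $\beta_m$) on the right-hand side of the defining inequality, so the expectation term is $2\beta_m$ and the per-coordinate constant $c_i \leq 2\beta_m + 2\beta_m + M/m$ (population part, $m-1$ stable empirical summands, one range-bounded summand) yields the deviation $(4m\beta_m+M)\sqrt{\log(1/\delta)/(2m)}$ after the identity $(4\beta_m+M/m)\sqrt{m}=(4m\beta_m+M)/\sqrt{m}$. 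Your two flagged subtleties --- taking $\mathbb{E}_{\cA}$ before invoking bounded differences, and not lumping the unstable summand with the stable ones --- are indeed the points where a careless version of this proof would go wrong.
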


\section{Basic on Quantum Comb}\label{appendix:quantum_comb}
In this section, the method quantum combs~\cite{chiribella2008quantum} will be introduced. Briefly, let $ \text{Lin} (\cH_{1})$ denote the space of linear operators acting on the Hilbert space $\cH_{1}$, and $\text{Lin} (\cH_{1}, \cH_{2})$ be the space of linear transforms from $\cH_{1}$ to $\cH_{2}$. It is not trivial to map a linear operator $X \in \text{Lin} (\cH_{1}, \cH_{2})$ into its vector $\Tilde{X} \in  \cH_{1} \ox \cH_{2}$:
\begin{equation}
   \Tilde{X} = \sum_{i,j} X_{i,j} \ket{i}\ox\ket{j},
\end{equation}
where $X_{i, j}$ is the element of $X$ and $\ket{i}$, $\ket{j}$ are two bases on $\cH_{1}$ and $\cH_{2}$, respectively. Vectorization of linear operator leads to Choi-Jamiołkowski isomorphism of quantum operators. Namely, a CPTP quantum channel $\cN \in \text{Lin} (\cH_{1}, \cH_{2})$ corresponds to its Choi representation $\cJ_{\cN}$ as:
\begin{equation}\label{eq: choi_unitary}
    \cJ_{\cN} = \id_{\cH_{1}} \ox \cN \left( \O_{\cH_{1}} \right) = \sum_{i,j} \ket{i} \bra{j}\ox \cN \left(\ket{i} \bra{j} \right),
\end{equation}
with $\O_{\cH_1} = \sum_{i, j} \ket{i}\bra{j} \ox \ket{i}\bra{j}$ as the unnormalised maximally entangled state in $\cH_1^{\ox 2}$. 

For any two given processes, they can be connected whenever the input system of one matches the output system of the other.
In the Choi representations, the composition of two quantum operators $\cN \circ \cM$ with $\cM \in \text{Lin}(\cH_0, \cH_1)$ and $\cN \in \text{Lin}(\cH_1, \cH_2)$ follows link product, denoted as $\star$:
\begin{equation}
    \cJ_{\cN \circ \cM} = \cJ_N \star \cJ_M = \tr_{\cH_1} [ ( \idop_{\cH_0} \ox \cJ_{\cN} ) \cdot ( \cJ_{\cM}^{T_{\cH_1}} \ox \idop_{\cH_2} )],
\end{equation}
with $\tr_{\cH_1}$ and $T_{\cH_1}$ denotes taking partial trace and transpose on $\cH_1$. More properties and detailed discussion about link product and Choi representation are referred to~\cite{chiribella2009theoretical}. It is worth mentioning that link product exhibits both associative and commutative properties:
\begin{equation}
\begin{aligned}
    \cJ_1 \star (\cJ_2 \star \cJ_3) &= (\cJ_1 \star \cJ_2) \star \cJ_3, \\
    \cJ_1 \star \cJ_2 &= \cJ_2 \star \cJ_1. 
\end{aligned}
\end{equation}

A quantum comb is the Choi representation associated with a quantum circuit board and is obtained as the link product of all component operators.

\begin{lemma}[quantum comb~\cite{chiribella2008quantum}]

Given a matrix $C \in \text{Lin}(\cP \ox \cI^n \ox \cO^n \ox \cF)$, it is the Choi representation of a quantum comb $\cC$ if and only if it satisfies $C \geq 0$ and
\begin{equation}
    C^{(0)} = 1, \quad \tr_{\cI_i}[C^{(i)}] = C^{(i-1)} \ox \cI_{\cO_{(i-1)}}, i = 1, \cdots, n+1,
\end{equation}
where $C^{(n+1)}:= C$, $C^{(i-1)} := \tr_{\cI_i \cO_{i-1}}[C^{(i)}]/d$, $I_{\cH}$ is the identity operator on $\cH$ and $I_{n+1}:= \cF$, $\cO:= \cP$.
\end{lemma}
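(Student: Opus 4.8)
The plan is to prove both implications by induction on the number of slots, using the physical picture that a quantum comb is the link product of a sequence of CPTP channels concatenated through internal memory (ancilla) systems, with the Choi-operator characterization of a single CPTP map serving as the base case. Recall that $\cN \in \text{Lin}(\cH_{\text{in}},\cH_{\text{out}})$ is CPTP if and only if $\cJ_{\cN} \geq 0$ and $\tr_{\cH_{\text{out}}}[\cJ_{\cN}] = I_{\cH_{\text{in}}}$; the comb conditions are precisely the multi-slot generalization of this trace-preservation constraint, imposed layer by layer. I adopt the convention (as in the statement) that the teeth are indexed $i=1,\dots,n+1$, with the zeroth output identified with $\cP$ and the last input identified with $\cF$, so there are $n+1$ constituent channels.

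For the forward direction, I would write the comb as the link product $C = \cJ_{n+1} \star \cdots \star \cJ_1$, where the $i$-th tooth is a CPTP channel carrying the running memory together with the $i$-th slot systems. Positivity $C \geq 0$ is immediate because the link product of positive operators (with the partial transposes absorbed) is positive. To obtain the normalization hierarchy I would compute the marginals $C^{(i)}$ by tracing out the teeth from the last inward: each partial trace invokes the trace-preservation of exactly one channel in Choi form, $\tr_{\text{out}}[\cJ_i] = I_{\text{in}}$, which collapses that tooth and leaves the reduced operator $C^{(i-1)}$ tensored with the identity on the freed output wire. Iterating reproduces exactly $\tr_{\cI_i}[C^{(i)}] = C^{(i-1)} \ox I_{\cO_{i-1}}$, with $C^{(0)} = 1$, and a further trace over $\cO_{i-1}$ supplies the dimension factor $d$ appearing in the definition of $C^{(i-1)}$.

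For the converse, I would reconstruct a realization from the hierarchy. The conditions guarantee that each marginal $C^{(i)}$ is itself a valid comb on the first $i$ slots, and that consecutive levels are linked so that the quotient $(C^{(i-1)})^{-1/2}\, C^{(i)}\, (C^{(i-1)})^{-1/2}$ — taken on the appropriate memory factor, with a Moore--Penrose pseudo-inverse on the support of $C^{(i-1)}$ — is the Choi operator of a CPTP channel from the accumulated memory plus the $i$-th input to the $i$-th output plus the updated memory. Dilating each such channel to an isometry via Stinespring and composing the isometries along the shared memory line yields a concrete circuit board; by the forward direction this board is well-defined and has Choi operator $C$, so $C$ is a comb.

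The main obstacle is the reconstruction step in the converse: one must verify that the pseudo-inverse construction produces a genuinely completely positive and trace-preserving map even when $C^{(i-1)}$ is rank-deficient. This hinges on showing that $\tr_{\cI_i}[C^{(i)}] = C^{(i-1)} \ox I_{\cO_{i-1}}$ forces the support of $C^{(i)}$ to lie inside that of $C^{(i-1)} \ox I$, so the quotient is well-defined, and that its partial trace over the new output equals the identity on the intended input factor. The remaining difficulty is organizational — tracking the alternating input/output tensor structure and the partial transposes introduced by the link product $\star$ so that every normalization factor, including the $d$, lands on the correct subsystem.
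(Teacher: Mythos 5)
The paper does not prove this lemma at all: it is imported verbatim from the cited reference (Chiribella, D'Ariano, Perinotti), so there is no in-paper proof to compare against. Your sketch correctly reproduces the standard argument from that source — forward direction by collapsing the link product tooth by tooth via trace preservation of each Choi operator, converse by the pseudo-inverse quotient $(C^{(i-1)})^{-1/2} C^{(i)} (C^{(i-1)})^{-1/2}$ plus Stinespring dilation — and you correctly identify the one genuinely delicate point, namely that the marginal condition forces $\supp(C^{(i)}) \subseteq \supp(C^{(i-1)}) \ox \cO_{i-1} \ox \cI_i$ so that the quotient is a well-defined CPTP map on the support.
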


\tikzset{every picture/.style={line width=0.75pt}} 

\begin{tikzpicture}[x=0.75pt,y=0.75pt,yscale=-1,xscale=1]
\centering
uncomment if require: \path (0,100); 

\draw  [color={rgb, 255:red, 74; green, 144; blue, 226 }  ,draw opacity=1 ][fill={rgb, 255:red, 74; green, 144; blue, 226 }  ,fill opacity=1 ] (173,60) -- (211,60) -- (211,161) -- (173,161) -- cycle ;
\draw  [color={rgb, 255:red, 74; green, 144; blue, 226 }  ,draw opacity=1 ][fill={rgb, 255:red, 74; green, 144; blue, 226 }  ,fill opacity=1 ] (272,59) -- (310,59) -- (310,134) -- (272,134) -- cycle ;
\draw  [color={rgb, 255:red, 74; green, 144; blue, 226 }  ,draw opacity=1 ][fill={rgb, 255:red, 74; green, 144; blue, 226 }  ,fill opacity=1 ] (372,61) -- (410,61) -- (410,161) -- (372,161) -- cycle ;
\draw    (211,161) -- (372,161) ;
\draw    (211,134) -- (227,134) -- (272,134) ;
\draw    (310,134) -- (326,134) -- (371,134) ;
\draw  [color={rgb, 255:red, 74; green, 144; blue, 226 }  ,draw opacity=1 ][fill={rgb, 255:red, 74; green, 144; blue, 226 }  ,fill opacity=1 ] (371.94,133.3) -- (372.06,160.52) -- (211.12,161.23) -- (211,134) -- cycle ;
\draw    (111,111) -- (174,111) ;
\draw    (409,111) -- (472,111) ;
\draw    (212,100) -- (231,100) ;
\draw    (253,100) -- (272,100) ;
\draw    (310,100) -- (329,100) ;
\draw    (353,100) -- (372,100) ;

\draw (109,82) node [anchor=north west][inner sep=0.75pt]   [align=left] {$\cP$};
\draw (216,76) node [anchor=north west][inner sep=0.75pt]   [align=left] {$\cI_{1}$};
\draw (250,76) node [anchor=north west][inner sep=0.75pt]   [align=left] {$\cO_{1}$};
\draw (280,76) node [anchor=north west][inner sep=0.75pt]   [align=left] {$\cdots$};
\draw (276,140) node [anchor=north west][inner sep=0.75pt]   [align=left] {$\cC$};
\draw (317,76) node [anchor=north west][inner sep=0.75pt]   [align=left] {$\cI_i$};
\draw (349,77) node [anchor=north west][inner sep=0.75pt]   [align=left] {$\cO_i$};
\draw (459,82) node [anchor=north west][inner sep=0.75pt]   [align=left] {$\cF$};

\end{tikzpicture}

\renewcommand\theproposition{\textcolor{blue}{1}}
\setcounter{proposition}{\arabic{proposition}-1}
\begin{proposition}
For a general $L$-slot sequential quantum comb $\cC$ with classical-data encoding unitary $U(\bm x)$ and measurement-channel $\cM$, we can represent the output of the quantum comb as
\begin{equation}
\begin{aligned}\label{eq:comb func}
    f(\cC, \bm x, \cM) &= \tr \left[\cC \cdot \rho_{in}^T \ox \left(\cJ_{U}(\bm x)^{\otimes L}\right)^{T} \ox \cM)\right],
\end{aligned}
\end{equation}
where $\cJ_{U}(\bm x)$ refers to the Choi representation of unitary $U(\bm x)$.
\end{proposition}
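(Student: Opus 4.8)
The plan is to obtain $f(\cC,\bm x,\cM)$ as the scalar resulting from the \emph{full} link product in which the comb $\cC$ is contracted against every object that plugs into it: the input state $\rho_{in}$ fed into the past system $\cP$, one copy of the encoding Choi operator $\cJ_U(\bm x)$ inserted into each of the $L$ slots, and the measurement $\cM$ applied to the future system $\cF$. Since a quantum comb is by definition the Choi operator obtained as the link product of its components, and since filling all slots and closing the open ends leaves no dangling wires, this total link product is a number, which is exactly the circuit output $f$. Concretely I would write
\[
f(\cC,\bm x,\cM) = \rho_{in} \star \cC \star \underbrace{\cJ_U(\bm x) \star \cdots \star \cJ_U(\bm x)}_{L \text{ copies}} \star \cM,
\]
and justify that the ordering is immaterial by the commutativity and associativity of $\star$ recalled above.

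Then I would unfold each binary $\star$ using its definition $\cJ_{\cN}\star\cJ_{\cM}=\tr_{\text{shared}}[(\idop\ox\cJ_{\cN})(\cJ_{\cM}^{T}\ox\idop)]$. Each contraction introduces a partial trace over the shared Hilbert space together with a transpose on the factor whose \emph{output} leg is that shared space. Using associativity I would merge all these partial traces into a single global trace $\tr[\,\cdot\,]$ (legitimate because after filling every slot the only remaining systems are internal and get traced out), and then track, system by system, where the transposes land. The key bookkeeping point is that $\cP$ is the output leg of $\rho_{in}$ and each slot output feeds $\cJ_U(\bm x)$, so the transpose convention deposits a transpose on $\rho_{in}$ and on every slot-filler, while $\cM$, sitting at the terminal future system $\cF$, is contracted without a transpose; $\cC$ itself is never the ``second factor'' in the way that forces a transpose, so it stays unprimed.

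Finally, since all $L$ slot-fillers are identical copies of $\cJ_U(\bm x)$, I would collect them into the tensor power $\cJ_U(\bm x)^{\ox L}$ and pull the per-slot transposes out as a single transpose $\bigl(\cJ_U(\bm x)^{\ox L}\bigr)^{T}$, and likewise write the contribution of the state as $\rho_{in}^{T}$, arriving at $\tr[\cC\cdot(\rho_{in}^{T}\ox(\cJ_U(\bm x)^{\ox L})^{T}\ox\cM)]$. The main obstacle is precisely this transpose-and-partial-trace bookkeeping: one must verify that the repeated link products reassemble into one clean global trace with transposes on exactly $\rho_{in}$ and the slot-fillers and none on $\cC$ or $\cM$, which requires fixing an unambiguous labelling of the past, slot-input, slot-output, and future systems of $\cC$ and checking that the transpose convention is applied consistently at every contraction.
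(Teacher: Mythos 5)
Your proposal is correct and follows essentially the same route as the paper: the paper likewise expresses the output as the link product of the comb with the $L$ slot-fillers, the input state, and the measurement, unfolds the $\star$ via its partial-trace-with-transpose definition (first forming $\cC \star \cJ_U(\bm x)^{\ox L}$, then contracting with $\rho_{in}^T$ over $\cP$, then closing with $\cM$), and merges everything into a single global trace with transposes landing exactly on $\rho_{in}$ and the slot-fillers. The only difference is presentational — the paper performs the contractions in three explicit stages rather than one simultaneous link product — so no further comparison is needed.
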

\renewcommand{\theproposition}{S\arabic{proposition}}
\begin{proof}
    A general sequential $L$-slot quantum comb $\cC$ with $L$ times of input unitary $U(\bm x)$ can be represented in link product as
\begin{equation}
    \cC_U = \cC \star \cJ_{U}(x)^{\ox n} = \tr_{\cI, \cO} \left[C \cdot \left( \cI_{\cP}  \ox (\cJ_{U}(x)^{\ox n})^{T} \ox \cI_{\cF} \right)\right],
\end{equation}
where $\tr_{\cI, \cO}$ refers to partial trace on systems $\{\cI_i, \cO_i\}^{L}_{i=1}$. By considering the initial state $\rho_{in}$, we then have the output state $\rho_{out}$ as,
\begin{equation}
\begin{aligned}
    \rho_{out} &= \tr_{\cP}[(\rho_{in}^T \ox \cI_\cF) \cdot \cC_U] \\
    &= \tr_{\cP}\left[(\rho_{in}^T \ox \cI_\cF) \cdot \tr_{\cI, \cO} \left[C \cdot \left( \cI_{\cP}  \ox (\cJ_{U}(x)^{\ox n})^{T} \ox \cI_{\cF} \right)\right]\right] \\
    &= \tr_{\cP, \cI, \cO} \left[C \cdot \rho_{in}^T  \ox (\cJ_{U}(x)^{\ox n})^{T}\ox \cI_{\cF})\right].
\end{aligned}
\end{equation}
Followed by the further measurement $\cM$, we have the output of quantum comb as
\begin{equation}
\begin{aligned}
    f(\cC, \bm x, \cM) &= \tr \left[\cC \cdot \rho_{in}^T \ox \left(\cJ_{U}(\bm x)^{\otimes L}\right)^{T} \ox \cM)\right].
\end{aligned}
\end{equation}
\end{proof}

Overall, quantum combs describe a more general kind of transformation by taking several input operations and output a new operation. It has shown its advantage for applying quantum comb in solving process transformation problems and optimizing the ultimate achievable performance, including transformations of unitary operations such as inversion~\cite{chen2024quantum, Yoshida2023}, complex conjugation, control-$U$ analysis~\cite{chiribella2016optimal}, as well as learning tasks~\cite{bisio2010optimal, sedlak2019optimal}. It can also be used for analyzing more general processes~\cite{zhu2024reversing} and has also inspired structures like the indefinite causal network~\cite{chiribella2013quantum, oreshkov2012quantum}.

\section{Data re-uploading QNN in Choi Representation}

A usual data re-uploading QNN has the following form:
\begin{equation}\label{eq: QNN depiction}
    \begin{quantikz}[row sep={0.7cm, between origins}, column sep=0.4cm ]
    \lstick{$\rho_{in}$} & \push{1} & \gate{U(\bm{\theta}^{(1)})} & \push{2} & \gate{U(\bm x)} & \push{3} & \gate{U(\bm{\theta}^{(2)})} & \push{4} & \gate{U(\bm x)} & \push{5}  & \cdots  & \gate{U(\bm{\theta}^{(L)})} & \meter{}\\ 
    \end{quantikz}\quad
\end{equation}

For classical data encoding into $U(\bm x)$, a universal data re-uploading QNN with circuit depth $L$ can be defined as
\begin{equation}\label{eq: unitary_re}
\begin{aligned}
    U(\bm{\theta}, \bm x) &= U(\bm{\theta}^{(L+1)})U(\bm x)U(\bm{\theta}^{(L)})\cdots U(\bm{\theta}^{(2)})U(\bm x)U(\bm{\theta}^{(1)}) \\
    &= U(\bm{\theta}^{(L+1)}) \cdot \prod_{l=1}^{L}\left( U(\bm x) U(\bm{\theta}^{(l)})\right).
\end{aligned}
\end{equation}

where $\bm x$ is the uploaded data and $\bm{\theta}^{(l)}$ is the $l$-th batch taken from the model parameter set $\bm{\theta}$.

\renewcommand\theproposition{\textcolor{blue}{2}}
\setcounter{proposition}{\arabic{proposition}-1}
\begin{proposition}\label{appendix:data_reup_loss}
For the data re-uploading QNN with depth $L$, we have the function $f(\bm{\theta}, \bm x, \cM)$ defined as the output of QNN, then in Choi representation followings, 
\begin{equation}\label{eq:output_func_comb}
    f(\bm{\theta}, \bm x, \cM) =  \tr\left[\bigotimes_{l=1}^{L+1}\cJ_{U}(\bm{\theta}_l) \cdot \left(\rho_{in}^T \ox \left(\cJ_{U}(\bm x)^{\otimes L}\right)^{T}\ox \cM \right)\right],
\end{equation}
\end{proposition}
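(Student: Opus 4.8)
The plan is to derive the corollary directly from Proposition 1 (the general comb output formula) by specializing the abstract comb $\cC$ to the concrete structure of a data re-uploading circuit. The key observation is that the trainable layers $U(\bm{\theta}^{(1)}),\ldots,U(\bm{\theta}^{(L+1)})$ are themselves just fixed unitary channels once the parameters are set, so the whole trainable part of the circuit is a quantum comb whose Choi representation decomposes as a tensor product of the individual layer Chois. Concretely, I would start from the unitary form in~\eqref{eq: unitary_re}, namely $U(\bm{\theta},\bm x) = U(\bm{\theta}^{(L+1)})\prod_{l=1}^{L}\bigl(U(\bm x)U(\bm{\theta}^{(l)})\bigr)$, and identify the slots of the comb: the $L$ empty slots into which the data-encoding unitaries $U(\bm x)$ are inserted are exactly the open ``teeth'' of the comb, while the trainable unitaries constitute the fixed body of the comb.

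First I would recall the link-product construction from Appendix~\ref{appendix:quantum_comb}: the Choi representation of a circuit board is the link product of all its component operators, and the link product is associative and commutative. The trainable comb $\cC$ is assembled by taking the link product $\cJ_U(\bm{\theta}^{(1)}) \star \cdots \star \cJ_U(\bm{\theta}^{(L+1)})$ of the individual layer Choi operators. Because these operators act on disjoint wire systems (each layer feeds into the next only through the shared input/output legs that are later contracted against the data or state/measurement operators), the link product over disjoint supporting systems collapses to an ordinary tensor product, giving $\cC = \bigotimes_{l=1}^{L+1}\cJ_U(\bm{\theta}^{(l)})$. Substituting this expression for $\cC$ into the general formula~\eqref{eq:comb func} of Proposition 1 immediately yields~\eqref{eq:output_func_comb}, since the $\rho_{in}^T$, $\bigl(\cJ_U(\bm x)^{\otimes L}\bigr)^T$, and $\cM$ factors are carried over verbatim from Proposition 1.

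The main obstacle I anticipate is the bookkeeping of which systems each Choi operator acts on, and verifying rigorously that the link product of the trainable layers reduces to a clean tensor product rather than leaving residual partial traces or transposes. In particular one must check that the $L$ data slots and the final measurement/input legs are precisely the systems left open (uncontracted) in $\cC$, so that contracting $\cC$ against $\rho_{in}^T \ox (\cJ_U(\bm x)^{\otimes L})^T \ox \cM$ correctly reproduces the alternating circuit of~\eqref{eq: unitary_re}. I would handle this by labeling the input system $\cP$, the intermediate slot systems $\{\cI_i,\cO_i\}_{i=1}^L$, and the output system $\cF$ as in the comb figure, and tracking how the transpose on the data legs (required by the link-product definition $\cJ_\cN \star \cJ_\cM = \tr[(\idop\ox\cJ_\cN)(\cJ_\cM^{T}\ox\idop)]$) accounts for the transpose appearing on $\cJ_U(\bm x)^{\otimes L}$. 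Once the system identifications are pinned down, the proof is essentially a substitution; the commutativity and associativity of the link product guarantee that the order in which the trainable layers are composed does not affect the final tensor-product form, so the result follows as a direct corollary of Proposition 1.
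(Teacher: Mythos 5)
Your proposal is correct and follows essentially the same route as the paper: the paper's proof likewise identifies the trainable body of the comb as the link product $\cJ_U(\bm{\theta}^{(1)})\star\cdots\star\cJ_U(\bm{\theta}^{(L+1)})$, uses the fact that the link product over disjoint systems collapses to a tensor product, builds the full circuit Choi recursively (explicitly checking the two-layer case first), and then contracts against $\rho_{in}^T$, $(\cJ_U(\bm x)^{\ox L})^T$, and $\cM$ exactly as in Proposition 1. The system-labeling bookkeeping you flag as the main obstacle is precisely what the paper handles by indexing the wires $1,\ldots,2L+2$ as in the circuit depiction and taking partial traces over systems $2,\ldots,2L+1$.
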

\renewcommand{\theproposition}{S\arabic{proposition}}
\begin{proof}
As it is mentioned in the Appendix~\ref{appendix:quantum_comb}, utilizing the link product of Choi representation and its commutative property, we first evaluate $\cJ_{U}(\bm{\theta}^{(2}) \star  \cJ_{U}(\bm x) \star \cJ_{U}(\bm{\theta}^{(1)})$. In this proof, it is helpful to label the systems with respect to the indices in~\eqref{eq: QNN depiction}. Notice that
\begin{equation}
\begin{aligned}
    \cJ_{U}(\bm{\theta}^{(2)}) \star \cJ_{U}(\bm{\theta}^{(1)}) = \cJ_{U}(\bm{\theta}^{(2)}) \ox \cJ_{U}(\bm{\theta}^{(1)}). 
\end{aligned}
\end{equation}
It can be verified that
\begin{equation}
\begin{aligned}
    \cJ_{U} (\bm{\theta}^{(2)}, \bm x, \bm{\theta}^{(1)}) &:= \cJ_{U}(\bm{\theta}^{(2)}) \star  \cJ_{U}(\bm x) \star \cJ_{U}(\bm{\theta}^{(1)}) \\
    &= \tr_{2, 3} \left[ \left[\cJ_{U}(\bm{\theta}^{(1)})\ox \cJ_{U}(\bm{\theta}^{(2)}) \right] \cdot (I \ox \cJ^{T}_{U}(\bm x) \ox I)\right], \\
\end{aligned}
\end{equation}
Considering the data encoding with $L$-slot, we recursively derive the Choi representation of the quantum circuit $U(\bm{\theta}, \bm x) $ as $\cJ(\bm{\theta}, \bm x)$, and
\begin{equation}
\begin{aligned}
    \cJ(\bm{\theta}, \bm x) = \tr_{2, ... ,  2L+1}\left[\bigotimes_{l=1}^{L+1}\cJ_{U}(\bm{\theta}^{(l)}) \cdot \left(I \ox \left(\cJ_{U}(\bm x)^{\otimes L}\right)^{T}\ox I \right)  \right], 
\end{aligned}
\end{equation}
which can be found to match our previous discussion in~\eqref{eq:comb func}. Then, the output function is given as,
\begin{equation}
\begin{aligned}
    f(\bm{\theta}, \bm x, \cM) &= \tr[\cM \star \cJ(\bm{\theta}, \bm x) \star \rho_{in}] \\  
     &= \tr\left[  \cJ(\bm{\theta}, \bm x ) \cdot  \left( I_{1} \ox \cM \right) \cdot  \left( \rho^{T}_{in} \ox I_{2L+2}\right)\right] \\
     &=  \tr\left[\bigotimes_{l=1}^{L+1}\cJ_{U}(\bm{\theta}^{(l)}) \cdot \left(\rho_{in}^T \ox \left(\cJ_{U}(\bm x)^{\otimes L}\right)^{T}\ox \cM \right)\right].
\end{aligned}
\end{equation}
This completes the proof.
\end{proof}

\section{Proof of Main Theory}\label{appendix:main_theory}
We will first present several useful lemmas for deriving the main theorem.
\begin{lemma}\label{lemma:function_output_diff}
    Let $\bm{\theta}_{S, t}$ and $\bm{\theta}_{S^i, t}$ represent the parameters learned after $t$ iterations on training sets $S$ and $S^i$, respectively, then the difference in the output function of a data re-uploading QNN is bounded by,
\begin{equation}
\begin{aligned}
     |f(\bm{\theta}_{S, t}, \bm x, \cM) - f(\bm{\theta}_{S^i, t}, \bm x, \cM)| \leq 2 \|\cM\|_{\infty}  \cdot  \sum_{j=1}^{K} \left|\theta^{(j)}_{S, t} - \theta^{(j)}_{S^i, t}\right|, 
\end{aligned}
\end{equation}
where $K$ denotes the total number of parameters. 
\end{lemma}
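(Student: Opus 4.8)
The plan is to bound the difference in output functions by the operator-norm distance between the two data re-uploading unitaries, and then invoke Lemma~\ref{lem:parameter_bound} to convert that operator-norm distance into a sum over per-parameter differences. The key observation is that the output function $f(\bm{\theta}, \bm x, \cM) = \tr[U(\bm\theta,\bm x)\,\rho_{in}\,U(\bm\theta,\bm x)^\dagger\,\cM]$ can be written, via Proposition~2, as a contraction of Choi operators, but for estimating the difference it is cleanest to work directly with the circuit unitary $U(\bm\theta,\bm x)$ from~\eqref{eq: unitary_re}. Writing $U_S := U(\bm{\theta}_{S,t},\bm x)$ and $U_{S^i} := U(\bm{\theta}_{S^i,t},\bm x)$, which differ only in their trainable parameters while sharing the same encoded data $\bm x$, I would start from
\begin{equation}
\begin{aligned}
\left|f(\bm{\theta}_{S,t},\bm x,\cM) - f(\bm{\theta}_{S^i,t},\bm x,\cM)\right|
= \left|\tr\!\left[\left(U_S\rho_{in}U_S^\dagger - U_{S^i}\rho_{in}U_{S^i}^\dagger\right)\cM\right]\right|.
\end{aligned}
\end{equation}

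First I would bound this by $\|\cM\|_\infty \cdot \|U_S\rho_{in}U_S^\dagger - U_{S^i}\rho_{in}U_{S^i}^\dagger\|_1$ using the trace-norm/operator-norm H\"older inequality together with $\tr[\rho_{in}]=1$. Next I would control the trace-norm difference of the evolved states by a standard telescoping argument: inserting and subtracting the mixed term $U_S\rho_{in}U_{S^i}^\dagger$, applying the triangle inequality, and using unitary invariance of the trace norm together with $\|\rho_{in}\|_1 = 1$, I obtain a factor of $2\|U_S - U_{S^i}\|_\infty$. This is the step that produces the leading constant $2$ in the statement. Concretely the chain gives
\begin{equation}
\begin{aligned}
\left\|U_S\rho_{in}U_S^\dagger - U_{S^i}\rho_{in}U_{S^i}^\dagger\right\|_1
\leq \left\|(U_S - U_{S^i})\rho_{in}U_S^\dagger\right\|_1 + \left\|U_{S^i}\rho_{in}(U_S - U_{S^i})^\dagger\right\|_1
\leq 2\left\|U_S - U_{S^i}\right\|_\infty.
\end{aligned}
\end{equation}

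Finally I would apply Lemma~\ref{lem:parameter_bound} to $\|U_S - U_{S^i}\|_\infty$. Since $U_S$ and $U_{S^i}$ differ only in the trainable parameters $\bm{\theta}_{S,t}$ versus $\bm{\theta}_{S^i,t}$ (with the data $\bm x$ and all fixed structural gates identical between the two circuits), the data re-uploading unitary of~\eqref{eq: unitary_re} fits the product form assumed in Lemma~\ref{lem:parameter_bound}, treating the encoding gates $U(\bm x)$ as part of the fixed gates $U_k$. Hence $\|U_S - U_{S^i}\|_\infty \leq \sum_{j=1}^{K}|\theta^{(j)}_{S,t} - \theta^{(j)}_{S^i,t}|$, and combining with the previous two inequalities yields exactly the claimed bound $2\|\cM\|_\infty \sum_{j=1}^{K}|\theta^{(j)}_{S,t} - \theta^{(j)}_{S^i,t}|$.

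The main obstacle I anticipate is verifying carefully that the data re-uploading circuit genuinely matches the hypotheses of Lemma~\ref{lem:parameter_bound}: the lemma is stated for a single product of parameterized single-qubit Pauli rotations interleaved with fixed gates, whereas~\eqref{eq: unitary_re} has $L+1$ trainable blocks $U(\bm{\theta}^{(l)})$ each of which is itself a product of several rotation gates, separated by the fixed data-encoding blocks $U(\bm x)$. I would need to argue that the encoding gates (which are identical in $U_S$ and $U_{S^i}$) can be absorbed into the fixed gates of the lemma, so that the total parameter count collected across all trainable blocks is precisely $K$ and each contributes a single term $|\theta^{(j)}_{S,t} - \theta^{(j)}_{S^i,t}|$. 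This bookkeeping — ensuring no parameter is double-counted and that the Pauli-rotation structure is preserved under the absorption — is the only delicate part; the norm manipulations themselves are routine.
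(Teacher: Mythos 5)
Your proposal is correct and follows essentially the same route as the paper: both reduce the problem to the bound $2\|\cM\|_{\infty}\,\|U(\bm{\theta}_{S,t},\bm x)-U(\bm{\theta}_{S^i,t},\bm x)\|_{\infty}$ via H\"older plus a telescoping/triangle-inequality step, and then invoke Lemma~\ref{lem:parameter_bound} with the encoding gates absorbed into the fixed gates. The only difference is cosmetic — the paper first passes through the Choi/comb representation and telescopes on the Heisenberg-picture operator $U^{\dagger}\cM U$ using isometric invariance of the spectral norm, whereas you telescope on the evolved state $U\rho_{in}U^{\dagger}$ in trace norm; both yield the same factor of $2$.
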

\begin{proof}
As it is shown in~\eqref{eq:output_func_comb} from Corollary~\ref{appendix:data_reup_loss}, the difference between the two output functions can be represented in quantum comb formalism as it is shown in~\eqref{eq:output_func_comb} and further bounded as follows,

\begin{equation}\label{eq: f-f}
\begin{aligned}
    |&f(\bm{\theta}_{S, t}, \bm x, \cM) - f(\bm{\theta}_{S^i, t}, \bm x, \cM)| \\
    &= \left|\tr \left[\left( \bigotimes_{l=1}^{L+1}\cJ_{U}(\bm{\theta}_{S, t}^{(l)}) - \bigotimes_{l=1}^{L+1}\cJ_{U}(\bm{\theta}_{S^i, t}^{(l)}) \right) \cdot \left(\rho_{in}^T \ox \left(\cJ_{U}(\bm x)^{\otimes L}\right)^{T}\ox \cM \right)\right]\right|\\
    &= \left|\tr \left[\left(\cJ(\bm{\theta}_{S, t}, \bm x) - \cJ(\bm{\theta}_{S^i, t}, \bm x)\right) \cdot (I \ox  \cM ) \cdot (\rho_{in}^T\otimes I)\right]\right|\\
    &\overset{(i)}{=} \left| \tr\left[ \left( \id \ox \cU(\bm{\theta}_{S, t}, \bm x)(\O)  -  \id \ox \cU(\bm{\theta}_{S^i, t}, \bm x)(\O) \right) \cdot (I \ox  \cM ) \cdot (\rho_{in}^T\otimes I) \right] \right| \\
 &\overset{(ii)}{\leq} \left\|   U^{\dagger}(\bm{\theta}_{S, t}, \bm x)\cM U(\bm{\theta}_{S, t}, \bm x)  -  U^{\dagger}(\bm{\theta}_{S^i, t}, \bm x)\cM U(\bm{\theta}_{S^i, t}, \bm x ) \right\|_{\infty}\\
&\overset{(iii)}{=} \|   U^{\dagger}(\bm{\theta}_{S, t}, \bm x)\cM U(\bm{\theta}_{S, t}, \bm x)  -  U^{\dagger}(\bm{\theta}_{S, t}, \bm x)\cM U(\bm{\theta}_{S^i, t}, \bm x) \|_{\infty} \\
&\quad + \| U^{\dagger}(\bm{\theta}_{S, t}, \bm x)\cM U(\bm{\theta}_{S^i, t}, \bm x) - U^{\dagger}(\bm{\theta}_{S^i, t}, \bm x)\cM U(\bm{\theta}_{S^i, t}, \bm x ) \|_{\infty}  \\
&\overset{(iv)}{=} \left\|   \cM U(\bm{\theta}_{S, t}, \bm x)  -  \cM U(\bm{\theta}_{S^i, t}, \bm x) \|_{\infty} + \| U^{\dagger}(\bm{\theta}_{S, t}, \bm x)\cM - U^{\dagger}(\bm{\theta}_{S^i, t}, \bm x)\cM  \right\|_{\infty}  \\
&\overset{(v)}{\leq} 2 \left\| \cM\right\|_{\infty} \cdot \left\|   U(\bm{\theta}_{S, t}, \bm x)  -   U(\bm{\theta}_{S^i, t}, \bm x) \right\|_{\infty},
\end{aligned}
\end{equation}
where $(i)$ recalls the Choi representation of a unitary with $\O$ denoting the unormalised maximally entangled state and $\cU$ denoting the corresponding quantum channel of the choi representation as~\eqref{eq: choi_unitary}, $(ii)$ uses Hölder's inequality, $(iii), (iv), (v)$ consider triangle inequality, isometric invariance, and submultiplicativity of spectrum norm, respectively. Without loss of generality, assuming all parameters are located on single Pauli rotations, we implement Lemma~\ref{lem:parameter_bound} to complete the proof. 
\end{proof}

\begin{lemma}\label{lem:param_shift_1}
Assume the first order derivative of the loss function $\ell$ be $\mathcal{C}_2$-Lipschitz continuous and smooth. Then, the change in parameter difference of QNNs trained with SGD for $t$ iterations on dataset $S$ and $S^i$ with respect to the same sample can be bounded by,
    \begin{equation*}
        \left|\frac{\partial}{\partial \theta^{(j)}_{S, t}}\ell \left(f(\bm{\theta}_{S, t}, \bm x, \cM),y \right)- \frac{\partial}{\partial \theta^{(j)}_{S^i, t}}\ell \left( f(\bm{\theta}_{S^i, t}, \bm x, \cM) ,y \right)\right| \leq 2\mathcal{C}_2\Vert \cM \Vert_\infty\sum_{k=1}^K|\theta^{(k)}_{S, t} - \theta^{(k)}_{S^i, t}|.
    \end{equation*}
\end{lemma}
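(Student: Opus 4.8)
The plan is to split the loss-derivative difference into two independent sources of variation: the sensitivity of $\ell$ to its scalar argument, which is governed by the smoothness constant $\cC_2$, and the sensitivity of the circuit output $f$ to the trainable parameters, which I will control with the tools already developed. Writing $f_S := f(\bm{\theta}_{S,t},\bm x,\cM)$ and $f_{S^i} := f(\bm{\theta}_{S^i,t},\bm x,\cM)$ for the two output functions produced by $\bm{\theta}_{S,t}$ and $\bm{\theta}_{S^i,t}$, the first step is to apply the smoothness hypothesis on $\ell$ in its composite form (the second displayed inequality of the loss definition), which gives, componentwise in the coordinate $j$,
\begin{equation*}
\left|\frac{\partial}{\partial\theta^{(j)}_{S,t}}\ell(f_S,y) - \frac{\partial}{\partial\theta^{(j)}_{S^i,t}}\ell(f_{S^i},y)\right| \le \cC_2\left|\frac{\partial f_S}{\partial\theta^{(j)}} - \frac{\partial f_{S^i}}{\partial\theta^{(j)}}\right|.
\end{equation*}
This reduces the lemma to showing that the two parameter-gradients of $f$ differ by at most $2\|\cM\|_\infty\sum_{k=1}^{K}|\theta^{(k)}_{S,t}-\theta^{(k)}_{S^i,t}|$, after which the prefactor $\cC_2$ delivers exactly the stated bound.

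For this gradient-difference estimate I would exploit that every tunable gate is a single-qubit Pauli rotation $e^{-i\theta^{(j)}P_j/2}$ whose generator $P_j$ has eigenvalues $\pm 1$, so the parameter-shift rule expresses the gradient exactly as a finite difference of the output function itself,
\begin{equation*}
\frac{\partial f}{\partial\theta^{(j)}}(\bm{\theta}) = \frac{1}{2}\left[f\left(\bm{\theta} + \tfrac{\pi}{2}\bm e_j,\bm x,\cM\right) - f\left(\bm{\theta} - \tfrac{\pi}{2}\bm e_j,\bm x,\cM\right)\right],
\end{equation*}
where $\bm e_j$ is the $j$-th coordinate vector. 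Substituting this identity for both $\bm{\theta}_{S,t}$ and $\bm{\theta}_{S^i,t}$ and using the triangle inequality, the derivative difference splits into two ordinary output-function differences, one between the $+\tfrac{\pi}{2}$-shifted vectors and one between the $-\tfrac{\pi}{2}$-shifted vectors.

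Each of these two differences is now exactly of the type controlled by Lemma~\ref{lemma:function_output_diff}. The key observation is that the common additive shift $\pm\tfrac{\pi}{2}\bm e_j$ cancels when one takes the per-coordinate difference of the shifted parameter vectors, so that each invocation of Lemma~\ref{lemma:function_output_diff} contributes $2\|\cM\|_\infty\sum_{k=1}^{K}|\theta^{(k)}_{S,t}-\theta^{(k)}_{S^i,t}|$. Combined with the two factors of $\tfrac12$ coming from the parameter-shift identity, this yields precisely $2\|\cM\|_\infty\sum_{k}|\theta^{(k)}_{S,t}-\theta^{(k)}_{S^i,t}|$ for the gradient difference, and multiplying by $\cC_2$ from the first step completes the argument.

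The \textbf{main obstacle} is the gradient-difference step, because the derivative of $f$ introduces an extra Pauli insertion $-\tfrac{i}{2}P_j$ into the circuit, so one cannot directly telescope $U(\bm{\theta}_{S,t})-U(\bm{\theta}_{S^i,t})$ as was done in Lemma~\ref{lemma:function_output_diff}. The parameter-shift identity is what resolves this cleanly, since it folds the extra generator back into an ordinary expectation value and lets Lemma~\ref{lemma:function_output_diff} (and, through it, Lemma~\ref{lem:parameter_bound}) be reused verbatim. An equivalent but more laborious route is to write $\frac{\partial f}{\partial\theta^{(j)}}=\tfrac{i}{2}\tr[U^\dagger[G_j,\cM]U\rho_{in}]$ with $G_j$ a parameter-dependent Hermitian-unitary conjugate of $P_j$ and to redo the telescoping of Lemma~\ref{lemma:function_output_diff} for the effective observable $[G_j,\cM]$; the extra dependence of $G_j$ on $\bm\theta$ makes this more delicate, which is why I would favor the parameter-shift approach.
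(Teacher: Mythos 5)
Your proposal is correct and follows essentially the same route as the paper's own proof: smoothness of $\ell$ reduces the claim to a gradient difference of $f$, the parameter-shift rule converts each gradient into a finite difference of shifted output functions, the triangle inequality splits the result into two shifted-vector comparisons, and the output-difference bound (the paper invokes the intermediate estimate $2\Vert\cM\Vert_\infty\Vert U(\cdot)-U(\cdot)\Vert_\infty$ together with Lemma~\ref{lem:parameter_bound}, which is exactly what your direct appeal to Lemma~\ref{lemma:function_output_diff} amounts to) finishes the argument, with the $\pm\tfrac{\pi}{2}$ shifts cancelling in the coordinate differences just as you note.
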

\begin{proof}
For convenience, we abbreviate $f(\bm{\theta}_{S, t}, \bm x, \cM)$ and $f(\bm{\theta}_{S^i, t}, \bm x, \cM)$ as $f_{\bm \theta}(\bm x)$ and $ f_{\bm \theta^i}(\bm x)$, respectively. Denote $\bm{\theta_j+\frac{\pi}{2}}:=(\theta^{(1)}_{S,t},\cdots,\theta^{(j)}_{S,t}+\pi/2,\cdots,\theta^{(K)}_{S,t})^T$, $\bm{\theta_j-\frac{\pi}{2}}:=(\theta^{(1)}_{S,t},\cdots,\theta^{(j)}_{S,t}-\pi/2,\cdots,\theta^{(K)}_{S,t})^T$. $\bm{\theta^i_j+\frac{\pi}{2}}$ and $\bm{\theta^i_j-\frac{\pi}{2}}$ have similar definitions. Then, we have
\begin{equation}
    \begin{aligned}
        \Big|\frac{\partial}{\partial \theta^{(j)}_{S, t}} &\ell \left(f(\bm{\theta}_{S, t}, \bm x, \cM),y \right) - \frac{\partial}{\partial \theta^{(j)}_{S^i, t}}\ell \left(f(\bm{\theta}_{S^i, t}, \bm x, \cM),y \right)\Big| \\
        &\overset{(i)}{\leq}\mathcal{C}_2\left|\frac{\partial}{\partial \theta^{(j)}_{S, t}}f(\bm{\theta}_{S, t}, \bm x, \cM)- \frac{\partial}{\partial \theta^{(j)}_{S^i, t}}f(\bm{\theta}_{S^i, t}, \bm x, \cM)\right|\\
        &\overset{(ii)}{=}\mathcal{C}_2 \left|\frac{1}{2}(f_{\bm{\theta_j+\frac{\pi}{2}}}(\bm x)-f_{\bm {\theta_j-\frac{\pi}{2}}}(\bm x))-\frac{1}{2}(f_{\bm{\theta^i_j+\frac{\pi}{2}}}(\bm x)-f_{\bm {\theta^i_j-\frac{\pi}{2}}}(\bm x))\right|\\
        &\overset{(iii)}{\leq} \frac{\mathcal{C}_2}{2}\left|f_{\bm{\theta_j+\frac{\pi}{2}}}(\bm x)-f_{\bm {\theta^i_j+\frac{\pi}{2}}}(\bm x)\right|+\frac{\mathcal{C}_2}{2}\left|f_{\bm{\theta_j-\frac{\pi}{2}}}(\bm x)-f_{\bm {\theta^i_j-\frac{\pi}{2}}}(\bm x)\right|\\
        &\overset{(iv)}{\leq} \mathcal{C}_2 \Vert\cM\Vert_{\infty} \left(\Vert U(\bm{\theta+\pi/2})-U(\bm{\theta^i+\pi/2})\Vert_\infty + \Vert U(\bm{\theta-\pi/2})-U(\bm{\theta^i-\pi/2}) \Vert_\infty\right)\\
        &\overset{(v)}{\leq} 2\mathcal{C}_2\Vert \cM \Vert_\infty\sum_{k=1}^K|\theta^{(j)}_{S, t} - \theta^{(k)}_{S^i, t}|,
    \end{aligned}
\end{equation}
where the inequality $(i)$ follows from the condition of Lipschitz 
continuity and smoothness, equation $(ii)$ is derived from parameter shift rules~\cite{Mitarai2018paramshift}, and $(iii)$ uses triangle inequality. Taking results from Lemma~\ref{lem:parameter_bound}, we have the inequalities  $(iv), (v)$ hold, which completes the proof.

\end{proof}

\begin{lemma}\label{lem:param_shift_2}
Assume the first order derivative of the loss function $\ell$ be $\mathcal{C}_2$-Lipschitz continuous and smooth. Then, the change in parameter difference of $L$-layers data re-uploading QNNs trained with SGD for $t$ iterations on dataset $S$ and $S^i$ with respect to the different sample can be bounded by,
    \begin{equation}
         \left|\frac{\partial}{\partial \theta^{(j)}_{S, t}}\ell \left(f(\bm{\theta}_{S, t}, \bm x, \cM),y \right)- \frac{\partial}{\partial \theta^{(j)}_{S^i, t}}\ell \left( f(\bm{\theta}_{S^i, t}, \bm x^{'}, \cM) ,y^{'} \right)\right| \leq  2\mathcal{C}_2\Vert \cM \Vert_\infty \left(\sum_{k=1}^K|\Delta \theta^i_k|+\sum_{k=1}^{LD}|\Delta x^i_k| \right),
    \end{equation}
where $D$ denotes the dimension of data $\bm x$ and $\bm x^\prime$, $\Delta \theta_k^i:=|\theta_{S^i,t}^{(k)}-\theta_{S,t}^{(k)}|$ and $\Delta x_k^i:= {x^\prime}^{(k)}-x^{(k)}$. 
\end{lemma}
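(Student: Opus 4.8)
The plan is to mirror the proof of Lemma~\ref{lem:param_shift_1} line by line; the only genuinely new feature is that the data-encoding angles $\bm x$ and $\bm x'$ now differ in addition to the parameters, so they must be carried through the same chain of inequalities and placed on an equal footing with the trainable angles.

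First I would use the $\mathcal{C}_2$-Lipschitz smoothness of $\ell$ to pass from the difference of the two loss-derivatives to $\mathcal{C}_2$ times the difference of the corresponding output-function derivatives $\partial_{\theta^{(j)}} f(\bm\theta_{S,t},\bm x,\cM)$ and $\partial_{\theta^{(j)}} f(\bm\theta_{S^i,t},\bm x',\cM)$. Next, applying the parameter-shift rule rewrites each derivative as a half-difference of outputs evaluated at parameters shifted by $\pm\pi/2$ in coordinate $j$, and a triangle inequality splits the result into a $+\pi/2$ term and a $-\pi/2$ term, each of the form $\tfrac12|f_{\bm\theta_j\pm\pi/2}(\bm x)-f_{\bm\theta^i_j\pm\pi/2}(\bm x')|$.

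For each of these two terms I would run the argument of Lemma~\ref{lemma:function_output_diff} — H\"older's inequality, the triangle inequality, isometric invariance, and submultiplicativity of the spectral norm — but now applied to the full circuit unitary $U(\bm\theta,\bm x)$ of~\eqref{eq: unitary_re}, which depends on both the trainable angles and the $L$ re-uploaded copies of $\bm x$. This yields $|f_{\bm\theta}(\bm x)-f_{\bm\theta^i}(\bm x')|\leq 2\|\cM\|_\infty\|U(\bm\theta,\bm x)-U(\bm\theta^i,\bm x')\|_\infty$. The key point is that Lemma~\ref{lem:parameter_bound} bounds the spectral distance of two such product unitaries by the sum of the absolute differences of \emph{all} their rotation angles; since the angle set here is precisely the $K$ trainable parameters together with the $LD$ data-encoding angles (one $D$-dimensional copy of $\bm x$ per layer), this delivers $\|U(\bm\theta,\bm x)-U(\bm\theta^i,\bm x')\|_\infty\leq\sum_{k=1}^K|\Delta\theta^i_k|+\sum_{k=1}^{LD}|\Delta x^i_k|$. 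Adding the two $\pm\pi/2$ contributions cancels the factor $\tfrac12$ and produces the claimed bound.

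The step I expect to be the main obstacle is the very first one, since the two losses are evaluated at \emph{different} labels $y$ and $y'$ while the smoothness hypothesis is stated for a fixed label. The parameter-shift rule is what rescues the argument: after that step the label no longer appears explicitly, because every remaining quantity is a pure circuit output $f$, so the label can only enter through the outer $\mathcal{C}_2$ bound. Making that first inequality rigorous requires assuming the derivative of $\ell$ in its prediction argument is $\mathcal{C}_2$-Lipschitz uniformly over the label (or otherwise absorbing the label mismatch into the same constant); this is exactly the point where the present lemma departs from Lemma~\ref{lem:param_shift_1}, and where care is needed.
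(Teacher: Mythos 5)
Your proposal matches the paper's proof essentially step for step: Lipschitz smoothness of the loss derivative, the parameter-shift rule, a triangle inequality over the $\pm\pi/2$ shifts, and Lemma~\ref{lem:parameter_bound} applied to all $K+LD$ rotation angles at once (the paper merely makes the same telescoping explicit by inserting the intermediate term $f_{\bm{\theta^i_j\pm\frac{\pi}{2}}}(\bm x)$ before invoking the lemma, so the two computations coincide). Your closing caveat about the mismatched labels $y\neq y'$ is well taken: the paper's smoothness assumption is stated for a fixed label and its step $(i)$ silently assumes the Lipschitz bound holds uniformly across labels, exactly the gap you identify.
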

\begin{proof}
 We consider the same settings mentioned in Lemma~\ref{lem:param_shift_1}. Then, we have
\begin{equation}
\begin{aligned}
    & \left|\frac{\partial}{\partial \theta^{(j)}_{S, t}}\ell \left(f(\bm{\theta}_{S, t}, \bm x, \cM),y \right)- \frac{\partial}{\partial \theta^{(j)}_{S^i, t}}\ell \left( f(\bm{\theta}_{S^i, t}, \bm x^{'}, \cM) ,y^{'} \right)\right| \\
    &\overset{(i)}{\leq} \mathcal{C}_2 \left|\frac{\partial}{\partial \theta_j}f_{\bm \theta}(\bm x)- \frac{\partial}{\partial \theta_j^i}f_{\bm {\theta}^i}(\bm x^\prime)\right|\\
    &\overset{(ii)}{=}\mathcal{C}_2 \left|\frac{1}{2}(f_{\bm{\theta_j+\frac{\pi}{2}}}(\bm x)-f_{\bm {\theta_j-\frac{\pi}{2}}}(\bm x))-\frac{1}{2}(f_{\bm{\theta_j^i+\frac{\pi}{2}}}(\bm x^\prime)-f_{\bm {\theta_j^i-\frac{\pi}{2}}}(\bm x^\prime))\right|\\
    &\leq\frac{\mathcal{C}_2}{2}\left|f_{\bm{\theta_j+\frac{\pi}{2}}}(\bm x)-f_{\bm {\theta_j^i+\frac{\pi}{2}}}(\bm x^\prime)\right|+\frac{\mathcal{C}_2 }{2}\left|f_{\bm{\theta_j-\frac{\pi}{2}}}(\bm x)-f_{\bm {\theta_j^i-\frac{\pi}{2}}}(\bm x^\prime)\right|\\
    &\overset{(iii)}{\leq} \frac{\mathcal{C}_2}{2} \Big(\left|f_{\bm{\theta_j+\frac{\pi}{2}}}(\bm x)-f_{\bm {\theta_j^i+\frac{\pi}{2}}}(\bm x)\right|+\left|f_{\bm{\theta_j^i+\frac{\pi}{2}}}(\bm x)-f_{\bm {\theta_j^i+\frac{\pi}{2}}}(\bm x^\prime)\right|\\
    &\quad\quad + \left|f_{\bm{\theta_j-\frac{\pi}{2}}}(\bm x)-f_{\bm {\theta_j^i-\frac{\pi}{2}}}(\bm x)\right|+\left|f_{\bm{\theta_j^i-\frac{\pi}{2}}}(\bm x)-f_{\bm {\theta_j^i-\frac{\pi}{2}}}(\bm x^\prime)\right|\Big)\\
    &\overset{(iv)}{\leq} 2\mathcal{C}_2\Vert \cM \Vert_\infty\left(\sum_{k=1}^K|\Delta \theta^i_k|+\sum_{k=1}^{LD}|\Delta x^i_k|\right),
\end{aligned}
\end{equation}
where $(i)$ is due to the Lipschitz continuous, $(ii)$ follows from the parameter shift rules, 
$(iii)$ is taken from triangle inequality and $(iv)$ holds due to Lemma~\ref{lem:parameter_bound}. 
\end{proof}

In the following analysis, we demonstrate that a data re-uploading QNN, when trained using the SGD algorithm, exhibits $\beta_m$ uniform stability. This stability criterion necessitates bounding the difference in the expected value of the learned parameters resulting from a single data perturbation. While this analytical approach aligns with established strategies~\cite{hardt2016train}, it is specifically adapted here to the unique context of quantum neural networks.
\renewcommand\theproposition{\textcolor{blue}{3}}
\setcounter{proposition}{\arabic{proposition}-1}
\begin{theorem}
    Assume the loss function $\ell$ is Lipschitz continuous and smooth. A $L$-layer data re-uploading QNN trained using the SGD algorithm for $\cT$ iterations is $\beta_m$-uniformly stable, where
\begin{equation}
      \beta_m \leq \frac{ LD \| \cM\|_{\infty} }{m} \cO \left(( \eta K\Vert \cM \Vert_\infty)^\cT \right).
\end{equation}
$K$ denotes the number of trainable parameters in the model, $\cM$ is the selected measurement operator, $\eta$ is the learning rate, $m$ refers to the size of the training dataset, and $D$ is the dimension of data.
\end{theorem}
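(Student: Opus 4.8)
The plan is to follow the classical SGD-stability argument in the style of Hardt--Recht--Singer~\cite{hardt2016train}, but to route every circuit-level quantity through the Choi/comb-based bounds already established. The central object will be the expected cumulative parameter divergence $\Delta_t := \sum_{k=1}^K \mathbb{E}_{\text{SGD}}\!\left[|\theta^{(k)}_{S,t} - \theta^{(k)}_{S^i,t}|\right]$ between the two parameter trajectories obtained by running SGD on $S$ and on $S^i$. Because both runs are coupled through the same initialization, $\Delta_0 = 0$, and by Definition~\ref{def:unifrom_stability} together with the output-difference estimate (Lemma~\ref{lemma:function_output_diff}) and Lipschitz continuity of $\ell$, it suffices to control $\Delta_\cT$: indeed $|\mathbb{E}_{\text{SGD}}[\ell(\cA_S,z)-\ell(\cA_{S^i},z)]| \le 2\cC_1\|\cM\|_\infty\,\Delta_\cT$, so up to the constant $\cC_1$ the stability parameter $\beta_m$ is just $\|\cM\|_\infty\,\Delta_\cT$.

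First I would derive a one-step recursion for $\Delta_t$. Writing the SGD update and applying the triangle inequality coordinatewise gives $|\theta^{(j)}_{S,t+1}-\theta^{(j)}_{S^i,t+1}| \le |\theta^{(j)}_{S,t}-\theta^{(j)}_{S^i,t}| + \eta\,|g^{(j)}_S - g^{(j)}_{S^i}|$, where $g^{(j)}$ is the $j$-th gradient component. Summing over $j=1,\dots,K$ and splitting on which example SGD draws: with probability $(m-1)/m$ the selected sample agrees in $S$ and $S^i$, so Lemma~\ref{lem:param_shift_1} bounds each component difference by $2\cC_2\|\cM\|_\infty\sum_k|\Delta\theta^{(k)}|$, and summing over the $K$ components yields the multiplicative factor $2\eta K\cC_2\|\cM\|_\infty$; with probability $1/m$ the differing sample is drawn, and Lemma~\ref{lem:param_shift_2} contributes the same multiplicative factor plus the extra data term $\sum_{k=1}^{LD}|\Delta x^{(k)}|$, each coordinate gap bounded by a constant $c$ (absorbed into $\cO$). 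Taking expectations produces the affine recurrence $\Delta_{t+1} \le a\,\Delta_t + \tfrac{1}{m}\,(a-1)\,cLD$, with $a := 1+2\eta K\cC_2\|\cM\|_\infty$. The crucial accounting is that the inhomogeneous data term carries the prefactor $1/m$ because the differing example is chosen with probability only $1/m$, whereas the homogeneous growth factor $a$ acts in both branches.

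Next I would solve this linear recurrence. Unrolling from $\Delta_0=0$ gives $\Delta_\cT \le \tfrac{(a-1)cLD}{m}\sum_{s=0}^{\cT-1}a^s = \tfrac{cLD}{m}\,(a^\cT-1)$, where the convenient cancellation of $(a-1)$ against the geometric-series denominator removes the smoothness constant from the prefactor and leaves the clean $1/m$ scaling. Hence $\Delta_\cT \le \tfrac{LD}{m}\,\cO\!\big((\eta K\|\cM\|_\infty)^\cT\big)$, and substituting back through the reduction of the first paragraph gives $\beta_m \le \tfrac{LD\|\cM\|_\infty}{m}\,\cO\!\big((\eta K\|\cM\|_\infty)^\cT\big)$, as claimed. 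The main obstacle is precisely the probabilistic two-case bookkeeping in the recurrence: one must verify that the data-perturbation term enters \emph{only} through the $1/m$-probability branch (this is what delivers the overall $1/m$ prefactor certifying stability), while correctly tracking how summing the $K$ per-parameter gradient bounds of Lemmas~\ref{lem:param_shift_1}--\ref{lem:param_shift_2} turns the smoothness constant into the base $\eta K\|\cM\|_\infty$ of the exponential. The remaining work—the geometric-series evaluation and the Choi-representation estimates—is already in hand from the auxiliary lemmas.
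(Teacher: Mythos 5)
Your proposal is correct and follows essentially the same route as the paper's proof: reduce to the summed parameter divergence via Lemma~\ref{lemma:function_output_diff} and Lipschitz continuity, split each SGD step into the probability-$(m-1)/m$ same-sample branch (Lemma~\ref{lem:param_shift_1}) and the probability-$1/m$ differing-sample branch (Lemma~\ref{lem:param_shift_2}), and unroll the resulting affine recurrence with growth factor $1+2\eta\cC_2 K\|\cM\|_\infty$ from $\Delta_0=0$. Your closed form $\Delta_\cT \le \tfrac{cLD}{m}(a^\cT-1)$ is exactly the paper's geometric sum with $c=4\pi$ coming from the bound $|\Delta x^i_k|\le 4\pi$.
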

\renewcommand{\theproposition}{S\arabic{proposition}}
\begin{proof}
Using the fact that the loss are Lipschitz continuous, the linearity of expectation and Lemma~\ref{lemma:function_output_diff}, we have,
\begin{equation}
\begin{aligned}
    |\mathbb{E}_{\text{SGD}}[\ell(\cA_S, z) - \ell(\cA_{S^i}, z)]| & \leq \cC_1 \mathbb{E}_{\text{SGD}} [|f(\bm{\theta}_{S, t}, \bm x, \cM) - f(\bm{\theta}_{S^i, t}, \bm x, \cM)|] \\
    & \leq 2 \cC_1 \|\cM\|_{\infty}  \cdot  \sum_{j=1}^{K} \mathbb{E}_{\text{SGD}} [|\theta^{(j)}_{S, t} - \theta^{(j)}_{S^i, t}|].
\end{aligned}
\end{equation}
Then, we will focus on the term $ \sum_{j=1}^{K} \mathbb{E}_{\text{SGD}} [|\theta^{(j)}_{S, t} - \theta^{(j)}_{S^i, t}|]$. Specifically, consider in the training process, SGD optimizer randomly selects a sample that is identical in both training set with probability $(1-1/m)$.  Then, it will select the sample that is differed in the training set with probability $1/m$.  Given an iteration step $t$ and denoting $\Delta \theta_{t+1}^j:=\theta_{S,t}^{(j)} - \theta_{S^i , t}^{(j)}$, $f(\bm\theta_{S,t},\bm x) := f(\bm\theta_{S,t}, \bm x, \cM)$ for short, we have the bound on $\mathbb{E}_{SGD}[|\theta_{t+1}^j|]$,
\begin{equation}
\begin{aligned}
    &\mathbb{E}_{SGD}[|\Delta \theta_{t+1}^j|] \\
    & \leq (1-\frac{1}{m})\mathbb{E}_{SGD}[|(\theta_{S,t}^{(j)} - \eta \frac{\partial\ell(f(\bm\theta_{S,t}, \bm x), y)}{\partial \theta_{S,t}^{(j)}}) - (\theta_{S^i,t}^{(j)} - \eta \frac{\partial\ell(f(\bm\theta_{S^i,t}, \bm x), y)}{\partial \theta_{S^i,t}^{(j)}})|] \\
    &\quad\quad + \frac{1}{m}\mathbb{E}_{SGD}[|(\theta_{S,t}^{(j)} - \eta \frac{\partial\ell(f(\bm\theta_{S,t}, \bm x^\prime), y^\prime)}{\partial \theta_{S,t}^{(j)}}) - (\theta_{S^i,t}^{(j)} - \eta \frac{\partial\ell(f(\bm\theta_{S^i,t}, \bm x^{\prime\prime}), y^{\prime\prime})}{\partial \theta_{S^i,t}^{(j)}})|] \\
    &= \mathbb{E}_{SGD}[|\Delta \theta_t^j|] + (1-\frac{1}{m})\eta\mathbb{E}_{SGD}[|\frac{\partial\ell(f(\bm\theta_{S,t}, \bm x), y)}{\partial \theta_{S,t}^{(j)}} - \frac{\partial\ell(f(\bm\theta_{S^i,t}, \bm x), y)}{\partial \theta_{S^i,t}^{(j)}}|] \\
    &\quad\quad + \frac{1}{m}\eta\mathbb{E}_{SGD}[|\frac{\partial\ell(f(\bm\theta_{S,t}, \bm x^\prime), y^\prime)}{\partial \theta_{S,t}^{(j)}} - \frac{\partial\ell(f(\bm\theta_{S^i,t}, \bm x^{\prime\prime}), y^{\prime\prime})}{\partial \theta_{S^i,t}^{(j)}}|].
    \end{aligned}
\end{equation}
According to Lemma~\ref{lem:param_shift_1} and Lemma~\ref{lem:param_shift_2} respectively, we have the following two inequalities,
\begin{align}
    \mathbb{E}_{SGD}[|\frac{\partial\ell(f(\bm\theta_{S,t}, \bm x), y)}{\partial \theta_{S,t}^{(j)}} - \frac{\partial\ell(f(\bm\theta_{S^i,t}, \bm x), y)}{\partial \theta_{S^i,t}^{(j)}}|] &\leq 2\mathcal{C}_2\Vert \cM \Vert_\infty\sum_{k=1}^K\mathbb{E}_{SGD}[|\Delta \theta^k_t|]\\
    \mathbb{E}_{SGD}[|\frac{\partial\ell(f(\bm\theta_{S,t}, \bm x^\prime), y^\prime)}{\partial \theta_{S,t}^{(j)}} - \frac{\partial\ell(f(\bm\theta_{S^i,t}, \bm x^{\prime\prime}), y^{\prime\prime})}{\partial \theta_{S^i,t}^{(j)}}|] &\leq 2\mathcal{C}_2\Vert \cM \Vert_\infty\left(\sum_{k=1}^K\mathbb{E}_{SGD}[|\Delta \theta^k_t|]+\sum_{k=1}^{LD}|\Delta x^i_k|\right).
\end{align}
Without loss of generality, we set $x^{(k)}\in[0,2\pi]$, which implies the inequality i.e. $|\Delta x_k^i|\leq 4\pi$. Consequently, we derive the following inequality:
\begin{equation}
    \begin{aligned}
        \mathbb{E}_{SGD}[|\Delta \theta_{t+1}^j|]\leq \mathbb{E}_{SGD}[|\Delta \theta_t^j|]+2\eta\mathcal{C}_2\Vert \cM \Vert_\infty\sum_{k=1}^K\mathbb{E}_{SGD}[|\Delta \theta^k_t|]+\frac{8\pi\eta\mathcal{C}_2\Vert \cM \Vert_\infty LD}{m},
    \end{aligned}
\end{equation}
which also implies that
\begin{equation}
    \begin{aligned}
        \sum_{j=1}^K\mathbb{E}_{SGD}[|\Delta \theta_{t+1}^j|]\leq (1+2\eta\mathcal{C}_2K\Vert \cM \Vert_\infty)\sum_{j=1}^{K}\mathbb{E}_{SGD}[|\Delta \theta_t^j|]+\frac{8\pi\eta\mathcal{C}_2 K \Vert \cM \Vert_\infty LD}{m}.
    \end{aligned}
\end{equation}
By recursion for each $t$, we have 
\begin{equation}
    \begin{aligned}
        \sum_{j=1}^{K}\mathbb{E}_{SGD}[|\Delta \theta_{T}^j|] &\leq \frac{8\pi\eta\mathcal{C}_2 K\Vert \cM \Vert_\infty LD}{m}\sum_{t=1}^{\cT}(1+2\eta\mathcal{C}_2K\Vert \cM \Vert_\infty)^{t-1}.
    \end{aligned}
\end{equation}

By definition of uniform stability as shown in Definition.~\ref{def:unifrom_stability}, we have,
\begin{equation}
    \beta_m \leq \frac{ LD \| \cM\|_{\infty} }{m}\cO\left((\eta K \Vert \cM \Vert_\infty)^{\cT}\right),
\end{equation}
which completes the proof.
\end{proof}

Based on the relationship between uniform stability and the generalization gap, as detailed in Theorem~\ref{theorem:gene_gap_stability}, we then establish the following,
\renewcommand\theproposition{\textcolor{blue}{4}}
\setcounter{proposition}{\arabic{proposition}-1}
\begin{corollary}
    Assume the loss function $\ell$ is Lipschitz continuous and smooth. Consider a learning algorithm $\cA_S$ that uses the data re-uploading QNN, trained on the dataset $S$ using stochastic gradient descent optimization algorithm over $\cT$ iterations. Then, the expected generalization error of $\cA_S$ is bounded as follows, holding with probability at least $1-\delta$ for $\delta \in (0,1)$,
\begin{equation}
\begin{aligned}
    \mathbb{E}_{\text{SGD}}[R(\cA_S) - \hat{R}(\cA_S)] \leq &\frac{ LD \| \cM\|_{\infty} }{m}\cO\left((\eta K \Vert \cM \Vert_\infty)^{\cT}\right) +\\
    &\quad\quad\left(LD \| \cM\|_{\infty}\cO\left((\eta K \Vert \cM \Vert_\infty)^{\cT}\right)+M\right)\sqrt{\frac{\log\frac{1}{\delta}}{2m}},
\end{aligned}
\end{equation}
where $K$ denotes the number of trainable parameters in the model, $\cM$ is the selected measurement operator, $\eta$ is the learning rate, $m$ refers to the size of the training dataset, $D$ is the dimension of data and $M$ is a constant depending on the loss function.
\end{corollary}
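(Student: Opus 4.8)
The plan is to obtain this corollary by directly composing two results already established in the excerpt: the quantitative uniform-stability bound of Theorem~3 and the abstract stability-to-generalization inequality recorded in Theorem~\ref{theorem:gene_gap_stability}. Theorem~3 shows that an $L$-layer data re-uploading QNN trained by SGD for $\cT$ iterations is $\beta_m$-uniformly stable with $\beta_m \leq \frac{LD\|\cM\|_\infty}{m}\cO\!\left((\eta K\|\cM\|_\infty)^\cT\right)$, so the remaining work reduces to feeding this explicit $\beta_m$ into the generic bound and simplifying.

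First I would check the hypotheses of Theorem~\ref{theorem:gene_gap_stability}: the learner is randomized (SGD supplies the randomness through its uniform sampling of the update index at each iteration), it is $\beta_m$-uniformly stable by Theorem~3, and the loss is bounded, $0 \leq \ell \leq M$. This last condition follows from the Lipschitz assumption together with the fact that QNN outputs lie in a bounded range (the measurement has finite operator norm and the state space is compact); it is precisely this boundedness that supplies the constant $M$ appearing in the statement. With the hypotheses in place, Theorem~\ref{theorem:gene_gap_stability} yields, with probability at least $1-\delta$,
\[
\mathbb{E}_{\text{SGD}}[R(\cA_S) - \hat{R}(\cA_S)] \leq 2\beta_m + (4m\beta_m + M)\sqrt{\frac{\log\frac{1}{\delta}}{2m}}.
\]

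Next I would substitute the explicit form of $\beta_m$ and track the two summands separately. The leading term $2\beta_m$ inherits the $1/m$ prefactor directly, giving $\frac{LD\|\cM\|_\infty}{m}\cO\!\left((\eta K\|\cM\|_\infty)^\cT\right)$ after absorbing the constant factor into the $\cO(\cdot)$. The key algebraic point is the $4m\beta_m$ contribution: multiplying $\beta_m$ by $m$ cancels its $1/m$ prefactor, producing the $m$-independent quantity $LD\|\cM\|_\infty\cO\!\left((\eta K\|\cM\|_\infty)^\cT\right)$, which then multiplies $\sqrt{\log(1/\delta)/(2m)}$. Folding the $M$-term into the same parentheses reproduces exactly the second summand of the claimed bound.

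The argument is essentially mechanical, so the only real care is bookkeeping rather than a conceptual obstacle. The step most easily mishandled is the $4m\beta_m$ cancellation: one must confirm that the $1/m$ hidden inside $\beta_m$ is cancelled by the explicit factor $m$, leaving a term that does not decay on its own and whose vanishing is carried entirely by the accompanying $\sqrt{1/m}$ factor. I would also note that the sign difference between the two stated gaps ($\hat{R}-R$ in Theorem~\ref{theorem:gene_gap_stability} versus $R-\hat{R}$ here) is immaterial, since uniform stability controls $|\ell(\cA_S,z)-\ell(\cA_{S^i},z)|$ symmetrically and the bound therefore holds in either orientation. All remaining constants (the factors $2$, $4$, $8\pi$, $\cC_2$, and similar) are swept into the $\cO(\cdot)$ notation, completing the proof.
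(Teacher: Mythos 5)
Your proposal is correct and follows exactly the route the paper takes: the paper derives this corollary by plugging the uniform-stability bound of Theorem~\ref{thm:stability_bound} into the generic stability-to-generalization inequality of Theorem~\ref{theorem:gene_gap_stability} and absorbing constants into the $\cO(\cdot)$ notation. Your additional remarks on verifying the bounded-loss hypothesis, the cancellation of $1/m$ in the $4m\beta_m$ term, and the symmetry of the two orientations of the gap are all sound and, if anything, more explicit than the paper's own one-line justification.
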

\renewcommand{\theproposition}{S\arabic{corollary}}

\section{Additional Experiments}\label{sec:additional_experiments}
In this section, we provide additional experimental results on the generalization gap, varying the number of samples. We also present the training and testing accuracies to more clearly illustrate the concepts discussed.

\paragraph{Varies on number of examples.} 
We check the convergence of the generalization gap with the increase of training data size $m$ in alignment with Corollary.~\ref{cor:stability_generalization}. The experiment is implemented by varying the size of train dataset $m \in [100, 500, 1000]$ on MNIST and Fashion MNIST datasets. The size of the testing dataset is set to be 2000. The number of data re-uploading times is set to be $L=16$, with learning rate $\eta=0.1$ to achieve better performance on the classification task. It is depicted in Figure~\ref{fig: m change} that with the examples of training data $m$ increases, the generalization gap is guaranteed to converge.

\begin{figure}[H]
\centering
\includegraphics[width=0.8\linewidth]{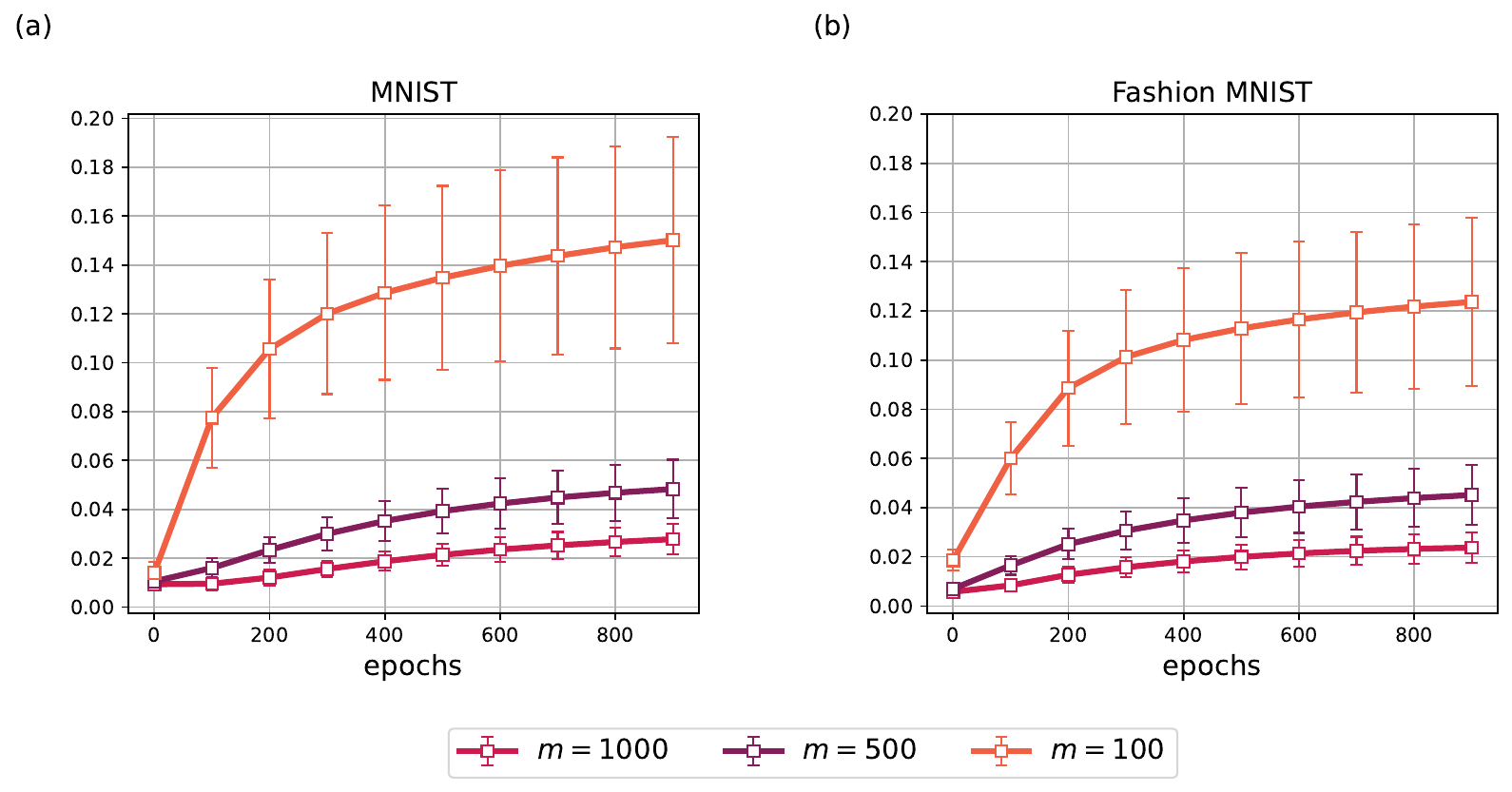} 
\caption{Generalization gap estimation with the varies on a number of examples $m \in [100, 500, 1000]$ for three datasets, with the error bar representing the standard deviation in 5 shots of experiments. }
\label{fig: m change} 
\end{figure}

\paragraph{Training and testing accuracy. } In addition to the experiments presented in the main text, we further demonstrate the training and testing accuracy under settings similar to those depicted in Figures~\ref{fig: L change} and~\ref{fig: lr change}, providing a clearer illustration of performance.

\begin{figure}[H]
\centering
\includegraphics[width=1.0\linewidth]{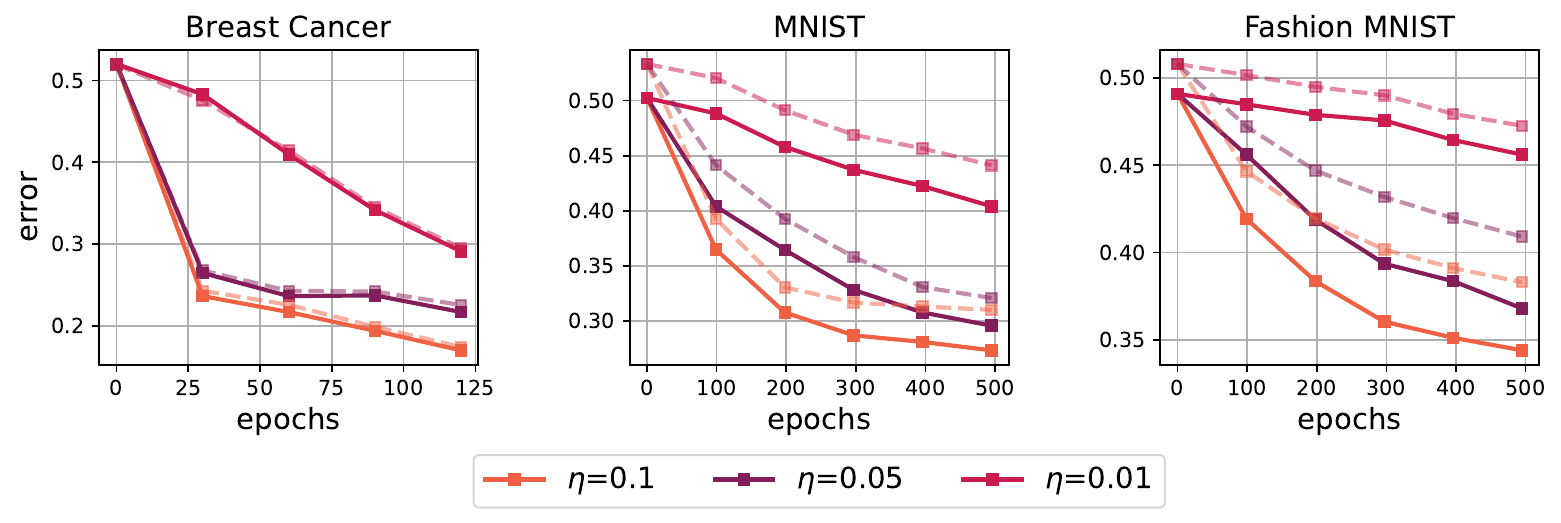} 
\caption{Model performances with varying the learning rate $\eta$ for three datasets, where solid and dashed lines denote training and testing errors, respectively. Errorbars are removed for visibility. }
\label{fig: rebuttal_Lr_change} 
\end{figure}
\vspace{-10pt}

\begin{figure}[H]
\centering
\includegraphics[width=1.0\linewidth]{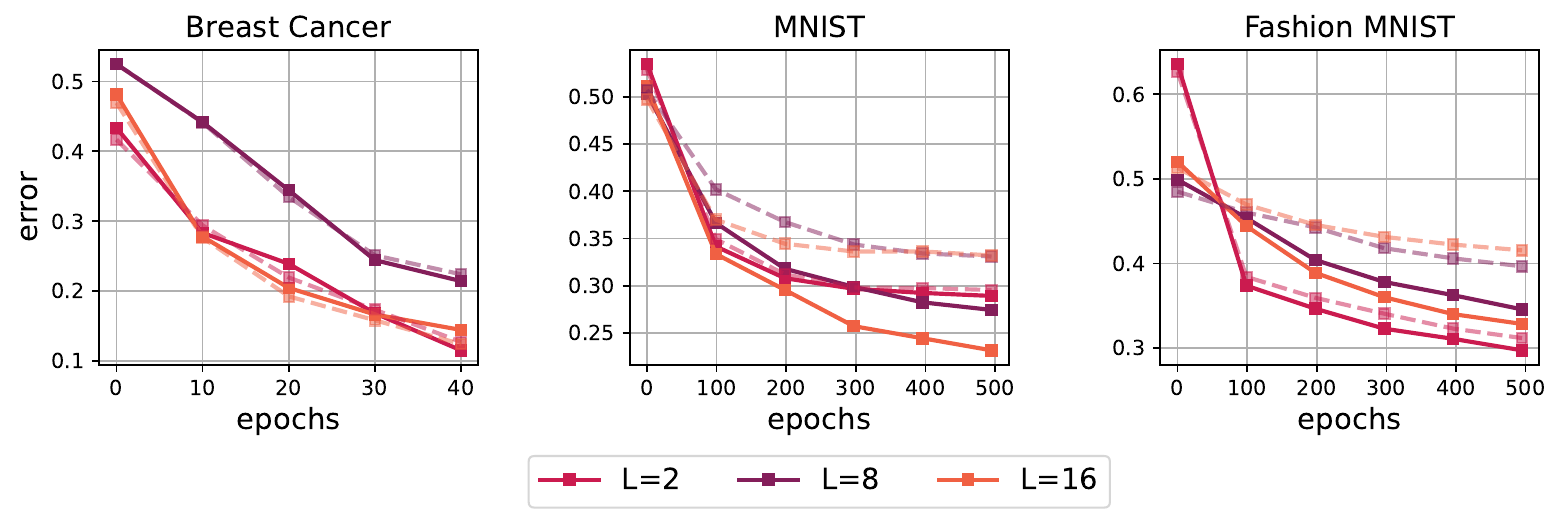} 
\caption{Model performances with varying on the layers $L \in [2, 8, 16]$ for three datasets, where solid and dashed lines denote training and testing errors, respectively. }
\label{fig: rebuttal_L_change} 
\end{figure}

\textcolor{black}{\paragraph{Varies of learning rate and data re-uploading times.} In addition to the experiments in the main text, which vary the data re-uploading times from $[2, 8, 16]$ and learning rate from $[0.1, 0.05, 0.01]$, we have conducted additional experiments to further illustrate the transition from stability to instability. In the learning rate experiments, it is observed that as the learning rate increases slightly, the model begins to exhibit unstable behavior. A similar behavior is observed in the experiments varying the number of data re-uploading layers. Additionally, we note that once the model transitions into the unstable phase, the variance increases significantly, which is another characteristic phenomenon associated with instability. }

\begin{figure}[H]
\centering
\includegraphics[width=0.8\linewidth]{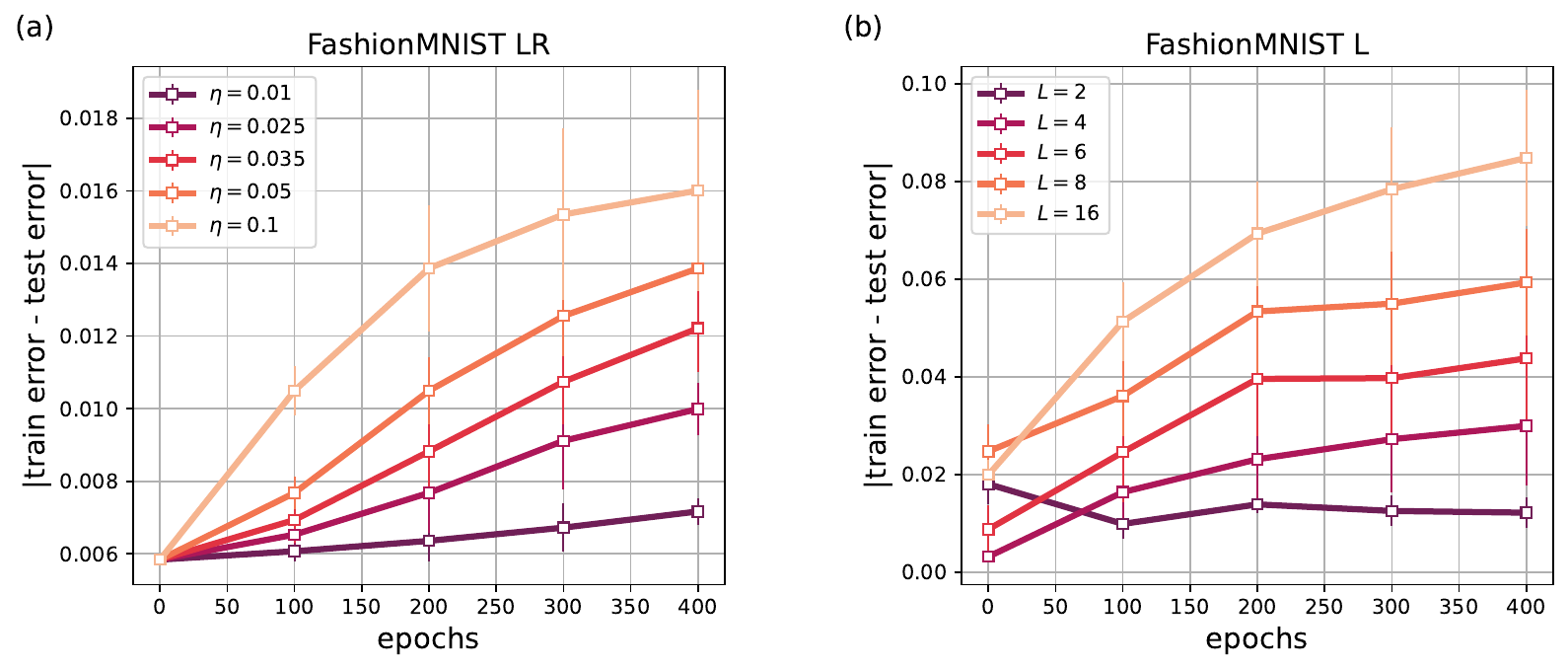} 
\caption{\textcolor{black}{(a) Model performances with varying on learning rate $\eta \in [0.01, 0.025, 0.035, 0.05, 0.1]$. (b) Model performances with varying on the numer of layers $L \in [2, 4, 6, 8, 16]$ for three datasets.}}
\label{fig: rebuttal_L_change} 
\end{figure}

\textcolor{black}{
\section{Depolarizing noise}\label{appendix:depo}
We now consider the scenario where the data re-uploading model is subject to practical noise, specifically the effects of the \textit{local depolarizing channel}. In this setting, a depolarizing channel~\cite{nielsen2010quantum} $\cN_{p}$ acts after each gate in the data reuploading model. Under this noise model, with probability $1-p$, the input state remains unchanged, while with probability $p$, the information is lost and the output becomes the maximally mixed state. This noise model effectively simulates practical imperfections in quantum devices and are common in experimental implementations.
}

\textcolor{black}{
The analysis of the robustness of the generalization bound in the presence of standard quantum noise models, such as depolarizing noise, can be easily extended from our previous results. For completeness, we present the following key lemma,
\renewcommand\theproposition{\textcolor{black}{S7}}
\setcounter{proposition}{\arabic{proposition}-1}
\begin{lemma}\label{lemma:function_output_diff_depo}
    Let $\bm{\theta}_{S, t}$ and $\bm{\theta}_{S^i, t}$ represent the parameters learned after $t$ iterations on training sets $S$ and $S^i$, respectively, then the difference in the output function of a data re-uploading QNN that experiencing local depolarizing channel is bounded by,
\begin{equation}
\begin{aligned}
     |f(\bm{\theta}_{S, t}, \bm x, \cM) - f(\bm{\theta}_{S^i, t}, \bm x, \cM)| \leq 2 (1-p)^{K} \|\cM\|_{\infty}  \cdot  \sum_{j=1}^{K} \left|\theta^{(j)}_{S, t} - \theta^{(j)}_{S^i, t}\right|, 
\end{aligned}
\end{equation}
where $K$ denotes the total number of parameters. 
\end{lemma}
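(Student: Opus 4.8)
The plan is to reduce the noisy statement to the noiseless Lemma~\ref{lemma:function_output_diff} by extracting a scalar contraction factor from the depolarizing channels, rather than redoing the whole chain of inequalities~(i)--(v). The single fact I would lean on is that the depolarizing channel acts as a contraction on the \emph{traceless} part of any operator: in the Heisenberg picture its adjoint satisfies $\cN_p^\dagger(A) = (1-p)A + \frac{p\,\tr[A]}{d}I$, so $\cN_p^\dagger$ multiplies every traceless operator by exactly $(1-p)$, while unitary conjugations and further depolarizing channels are trace preserving and hence map traceless operators to traceless operators. This is the quantum-information analogue of ``signal decay'' per noisy layer.

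Concretely, I would first pass to the Heisenberg picture, writing $f(\bm\theta,\bm x,\cM)=\tr[\rho_{in}\,\Phi_{\bm\theta,\bm x}^\dagger(\cM)]$ for the noisy channel $\Phi_{\bm\theta,\bm x}$. Decomposing $\cM = \cM_0 + \frac{\tr[\cM]}{d}I$ with $\cM_0$ traceless, the identity component contributes $\tr[\cM]\,\tr[\Phi_{\bm\theta,\bm x}(\rho_{in})]/d=\tr[\cM]/d$ identically for both $S$ and $S^i$ by trace preservation, so it cancels in the difference and I may work with $\cM_0$ alone. Propagating $\cM_0$ backward through the circuit, each of the depolarizing channels following the $K$ trainable gates contributes one factor $(1-p)$ — tracelessness being preserved by the intervening conjugations — yielding the exact factorization $f(\bm\theta_{S,t},\bm x,\cM)-f(\bm\theta_{S^i,t},\bm x,\cM) = (1-p)^{K'}\bigl(f_{\mathrm{noiseless}}(\bm\theta_{S,t},\bm x,\cM)-f_{\mathrm{noiseless}}(\bm\theta_{S^i,t},\bm x,\cM)\bigr)$, where $K'\ge K$ counts all depolarizing channels. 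Since $0\le 1-p\le 1$ gives $(1-p)^{K'}\le(1-p)^{K}$, I would then apply Lemma~\ref{lemma:function_output_diff} verbatim to the noiseless difference and collect the factor to obtain $\lvert f(\bm\theta_{S,t},\bm x,\cM)-f(\bm\theta_{S^i,t},\bm x,\cM)\rvert\le 2(1-p)^{K}\|\cM\|_\infty\sum_{j=1}^{K}\lvert\theta^{(j)}_{S,t}-\theta^{(j)}_{S^i,t}\rvert$.

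The main obstacle is reconciling the stated uniform factor $(1-p)^{K}$ with the \emph{local} (single-qubit) nature of the channel. For a global depolarizing channel the factorization above is exact and each channel contributes $(1-p)$ no matter how the back-propagated observable spreads, because every traceless operator is contracted; the argument then goes through cleanly and, via $(1-p)^{K'}\le(1-p)^{K}$, one need not even track the precise number of channels. For a strictly local channel, however, contraction by $(1-p)$ is only guaranteed when the back-propagated Pauli content overlaps the depolarized qubit, so the per-channel bookkeeping is more delicate. I would therefore either establish the result under the global/uniform depolarizing model, where $(1-p)^{K}$ is immediate, or, staying with the telescoping structure behind Lemma~\ref{lem:parameter_bound}, argue that the perturbation introduced by changing the $k$-th trainable single-qubit rotation is supported on that gate's qubit immediately afterward and is thus contracted by its adjacent local channel; carefully propagating the operator weight to recover the full $(1-p)^{K}$ scaling (rather than a single guaranteed factor) is the step that demands the most care.
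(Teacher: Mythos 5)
Your approach matches the paper's: the proof given there is a one-line appeal to a lemma of Du et al.\ (the reference \texttt{du2021A}) supplying exactly the $(1-p)$-per-noisy-gate factorization of the output function, combined with Lemma~\ref{lemma:function_output_diff} --- which is precisely the reduction you carry out explicitly via the Heisenberg-picture traceless decomposition and the contraction $\cN_p^\dagger(A)=(1-p)A+\tfrac{p\,\tr[A]}{d}I$. The obstacle you flag for strictly local single-qubit channels is genuine (the $(1-p)$ contraction per channel is guaranteed only when the back-propagated observable has nontrivial support on the depolarized qubit, so the uniform $(1-p)^{K}$ factor is clean only for the global depolarizing model, and can even fail as an upper bound when some $\Delta\theta^{(k)}$ vanish), but the paper does not resolve this either --- the cited lemma is for the global model --- so your derivation fills in the details of the paper's argument rather than taking a different route.
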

\begin{proof}
By combining the lemma from~\cite{du2021A} with the proof technique used Lemma~\ref{lemma:function_output_diff}, it can be verifed the above lemma holds.
\end{proof}
}

\textcolor{black}{
Following a similar procedure as in the proof of the generalization bound, we first examine the effects of depolarizing noise from the perspective of QNN stability,
\renewcommand\theproposition{\textcolor{black}{4}}
\setcounter{proposition}{\arabic{proposition}-1}
\begin{theorem}
    Assume the loss function $\ell$ is Lipschitz continuous and smooth. A $L$-layer data re-uploading QNN under local depolarizing noise level $p$ and trained using the SGD algorithm for $\cT$ iterations is $\beta_m$-uniformly stable, where
\begin{equation}
      \beta_m \leq \frac{ (1-p)^{LD} LD \| \cM\|_{\infty} }{m} \cO \left(( (1-p)^{K} \eta K\Vert \cM \Vert_\infty)^\cT \right).
\end{equation}
$K$ denotes the number of trainable parameters in the model, $\cM$ is the selected measurement operator, $\eta$ is the learning rate, $m$ refers to the size of the training dataset, and $D$ is the dimension of data.
\end{theorem}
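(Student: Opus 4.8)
The plan is to mirror the noise-free stability argument (the proof of Theorem~\ref{thm:stability_bound}) verbatim at the structural level, replacing each appeal to the output-difference estimate by its depolarizing analogue while carefully tracking the two distinct contraction factors that arise. First I would invoke the uniform-stability definition (Definition~\ref{def:unifrom_stability}) and combine the Lipschitz continuity of $\ell$, linearity of expectation, and the noisy output-difference Lemma~\ref{lemma:function_output_diff_depo} to obtain
\begin{equation*}
|\mathbb{E}_{\text{SGD}}[\ell(\cA_S,z)-\ell(\cA_{S^i},z)]| \leq 2\cC_1(1-p)^{K}\|\cM\|_\infty\sum_{j=1}^{K}\mathbb{E}_{\text{SGD}}[|\theta^{(j)}_{S,\cT}-\theta^{(j)}_{S^i,\cT}|],
\end{equation*}
so that, exactly as in the noiseless case, everything reduces to a recursive control of the parameter divergence $\sum_{j}\mathbb{E}_{\text{SGD}}[|\Delta\theta^{j}_{t}|]$.

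Next I would establish the depolarizing analogues of Lemma~\ref{lem:param_shift_1} and Lemma~\ref{lem:param_shift_2}. These follow from re-running the same parameter-shift-rule and triangle-inequality derivations, except that each invocation of an output difference now contracts by a noise factor supplied by Lemma~\ref{lemma:function_output_diff_depo} and its data-perturbation counterpart. The crucial bookkeeping is that a term in which only the \emph{parameters} differ (same encoded data) contracts by $(1-p)^{K}$, arising from the $K$ trainable gates, whereas a term in which only the \emph{data} differ (same parameters) contracts by $(1-p)^{LD}$, arising from the $LD$ data-encoding gates. Thus the same-sample bound becomes $2\cC_2(1-p)^{K}\|\cM\|_\infty\sum_k|\Delta\theta_k|$, while the different-sample bound of Lemma~\ref{lem:param_shift_2} splits into a parameter part weighted by $(1-p)^{K}$ and a data part weighted by $(1-p)^{LD}$.

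I would then assemble the one-step SGD recursion as before: with probability $1-1/m$ the optimizer draws a common example, and with probability $1/m$ it draws the differing one. Both cases contribute the homogeneous parameter term, so after summing over $j$ the multiplier is $1+2\eta\cC_2(1-p)^{K}K\|\cM\|_\infty$; only the $1/m$ case contributes the inhomogeneous data term, which using $|\Delta x_k|\leq 4\pi$ (since $x^{(k)}\in[0,2\pi]$) equals $8\pi\eta\cC_2(1-p)^{LD}K\|\cM\|_\infty LD/m$. Solving this linear recursion as a geometric series over $\cT$ iterations gives
\begin{equation*}
\sum_{j=1}^{K}\mathbb{E}_{\text{SGD}}[|\Delta\theta^{j}_{\cT}|] \leq \frac{4\pi(1-p)^{LD}LD}{(1-p)^{K}m}\left((1+2\eta\cC_2(1-p)^{K}K\|\cM\|_\infty)^{\cT}-1\right).
\end{equation*}
Substituting this into the first display, the $(1-p)^{K}$ prefactor from the Lipschitz step cancels the $(1-p)^{-K}$ in the denominator, and absorbing $\cC_1,\cC_2$ and the additive $-1$ into the $\cO$-notation over $\cT$ yields the claimed bound.

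The main obstacle I expect is the second step, namely pinning down the two contraction exponents correctly rather than the calculations themselves, which are routine copies of the noiseless argument. In particular I would need to verify the data-perturbation analogue of Lemma~\ref{lemma:function_output_diff_depo}---that replacing $\bm x$ by $\bm x'$ contracts the output difference by $(1-p)^{LD}$---which, like Lemma~\ref{lemma:function_output_diff_depo}, follows by combining the depolarizing-channel contraction estimate of~\cite{du2021A} with the submultiplicativity argument of Lemma~\ref{lemma:function_output_diff} applied to the $LD$ encoding slots. A secondary point of care is that the leftover $(1-p)^{-K}$ in the recursion is harmless precisely because it is exactly cancelled by the Lipschitz-step prefactor; keeping these two factors paired throughout is what makes the final prefactor $(1-p)^{LD}$ and the exponential base $(1-p)^{K}$ emerge cleanly rather than as mismatched powers.
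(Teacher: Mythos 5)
Your proposal is correct and follows essentially the same route as the paper: Lipschitz continuity plus the noisy output-difference lemma reduces everything to the parameter-divergence recursion, the two gradient-difference lemmas are re-derived with the $(1-p)^{K}$ contraction on parameter terms and $(1-p)^{LD}$ on data terms, and the one-step SGD recursion is unrolled into a geometric series. Your explicit closed-form solution of the recursion and the remark about the cancellation of the $(1-p)^{K}$ factors is a slightly more careful bookkeeping of what the paper compresses into ``by recursion for each $t$,'' but it is the same proof.
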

\begin{proof}
Using the loss are Lipschitz continuous, the linearity of expectation and Lemma~\ref{lemma:function_output_diff_depo}, we have,
\begin{equation}
\begin{aligned}
    |\mathbb{E}_{\text{SGD}}[\ell(\cA_S, z) - \ell(\cA_{S^i}, z)]| & \leq \cC_1 \mathbb{E}_{\text{SGD}} [|f(\bm{\theta}_{S, t}, \bm x, \cM) - f(\bm{\theta}_{S^i, t}, \bm x, \cM)|] \\
    & \leq 2 (1-p)^{K} \cC_1 \|\cM\|_{\infty}  \cdot  \sum_{j=1}^{K} \mathbb{E}_{\text{SGD}} [|\theta^{(j)}_{S, t} - \theta^{(j)}_{S^i, t}|].
\end{aligned}
\end{equation}
Analogous to the proof of Lemma~\ref{lem:param_shift_1} and Lemma~\ref{lem:param_shift_2}, we have the following two inequalities which characterize the behavior of the model under local depolarizing noise,
\begin{align}
    \mathbb{E}_{SGD}[|\frac{\partial\ell(f(\bm\theta_{S,t}, \bm x), y)}{\partial \theta_{S,t}^{(j)}} - \frac{\partial\ell(f(\bm\theta_{S^i,t}, \bm x), y)}{\partial \theta_{S^i,t}^{(j)}}|] &\leq 2(1-p)^{K}\mathcal{C}_2\Vert \cM \Vert_\infty\sum_{k=1}^K\mathbb{E}_{SGD}[|\Delta \theta^k_t|]
\end{align}
\begin{align}
    &\mathbb{E}_{SGD}[|\frac{\partial\ell(f(\bm\theta_{S,t}, \bm x^\prime), y^\prime)}{\partial \theta_{S,t}^{(j)}} - \frac{\partial\ell(f(\bm\theta_{S^i,t}, \bm x^{\prime\prime}), y^{\prime\prime})}{\partial \theta_{S^i,t}^{(j)}}|] \\
    &\leq 2\mathcal{C}_2\Vert \cM \Vert_\infty\left((1-p)^{K}\sum_{k=1}^K\mathbb{E}_{SGD}[|\Delta \theta^k_t|]+(1-p)^{LD}\sum_{k=1}^{LD}|\Delta x^i_k|\right).
\end{align}
By recursion for each $t$ and following the definition of uniform stability as shown in Definition.~\ref{def:unifrom_stability}, we have,
\begin{equation}
    \beta_m \leq \frac{ (1-p)^{LD} LD \| \cM\|_{\infty} }{m}\cO\left(( (1-p)^{K}\eta K \Vert \cM \Vert_\infty)^{\cT}\right),
\end{equation}
which completes the proof.
\end{proof}
}

\textcolor{black}{
\renewcommand\theproposition{\textcolor{black}{5}}
\setcounter{proposition}{\arabic{proposition}-1}
\begin{corollary}[SGD-dependent Generalization Gap under Depolarizing Noise]
    Assume the loss function $\ell$ is Lipschitz continuous and smooth. Consider a learning algorithm $\cA_S$ that uses the data re-uploading QNN, trained on the dataset $S$ using stochastic gradient descent optimization algorithm over $\cT$ iterations under local depolarizing noise level $p$. Then, the expected generalization error of $\cA_S$ is bounded as follows, holding with probability at least $1-\delta$ for $\delta \in (0,1)$,
\begin{equation}
\begin{aligned}
    \mathbb{E}_{\text{SGD}}[R(\cA_S) - \hat{R}(\cA_S)] \leq &\frac{ (1-p)^{LD}LD \| \cM\|_{\infty} }{m}\cO\left(( \eta(1-p)^{K}  K \Vert \cM \Vert_\infty)^\cT\right) \\
    &\quad +\left((1-p)^{LD} LD \| \cM\|_{\infty}\cO\left((\eta (1-p)^{K} K \Vert \cM \Vert_\infty)^\cT\right)+M\right)\sqrt{\frac{\log\frac{1}{\delta}}{2m}},
\end{aligned}
\end{equation}
where $K$ denotes the number of trainable parameters in the model, $\cM$ is the selected measurement operator, $\eta$ is the learning rate, $m$ refers to the size of the training dataset, $D$ is the dimension of data and $M$ is a constant depending on the loss function.
\end{corollary}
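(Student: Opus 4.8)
The plan is to combine the depolarizing-noise stability bound established in the theorem immediately preceding this corollary (the noisy analogue of Theorem~\ref{thm:stability_bound}) with the generic stability-to-generalization reduction of Theorem~\ref{theorem:gene_gap_stability}, mirroring exactly the argument used for the noiseless Corollary~\ref{cor:stability_generalization}. The only new ingredient relative to that corollary is that every stability estimate now carries the depolarizing attenuation factors $(1-p)^{K}$ and $(1-p)^{LD}$, which are simply propagated through unchanged.

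First I would recall from the preceding (noisy) stability theorem that, under the stated Lipschitz-continuity and smoothness assumptions on $\ell$, the algorithm is $\beta_m$-uniformly stable with
$$\beta_m \leq \frac{(1-p)^{LD}\, LD\, \|\cM\|_\infty}{m}\,\cO\!\left(((1-p)^{K}\eta K\|\cM\|_\infty)^{\cT}\right).$$
Next I would invoke Theorem~\ref{theorem:gene_gap_stability}: for any $\beta_m$-uniformly stable randomized algorithm with bounded loss $0\le\ell\le M$, with probability at least $1-\delta$ over the draw of $S,z$,
$$\mathbb{E}_{\cA}[\hat{\cR}(\cA_S) - \cR(\cA_S)] \leq 2\beta_m + (4m\beta_m + M)\sqrt{\frac{\log\tfrac{1}{\delta}}{2m}}.$$

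Substituting the stability bound finishes the proof by a direct bookkeeping argument. The leading term $2\beta_m$ produces the summand $\tfrac{(1-p)^{LD}LD\|\cM\|_\infty}{m}\cO(\cdots)$, where the constant $2$ is absorbed into $\cO$. In the second summand the explicit factor $m$ in $4m\beta_m$ cancels the $1/m$ in $\beta_m$, leaving $4(1-p)^{LD}LD\|\cM\|_\infty\,\cO(\cdots)$; absorbing the absolute constant $4$ into $\cO$ gives $(1-p)^{LD}LD\|\cM\|_\infty\,\cO(\cdots)$, and adding $M$ reproduces the claimed parenthesized factor multiplying $\sqrt{\log(1/\delta)/(2m)}$. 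There is no substantive obstacle here: the result is a one-line substitution once both input theorems are in hand, and the only care required is tracking the cancellation of $m$ in the $4m\beta_m$ term and the absorption of numerical constants into the $\cO$ notation, precisely as in the noiseless case.
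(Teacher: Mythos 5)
Your proposal is correct and follows exactly the route the paper intends: the corollary is obtained by substituting the depolarizing-noise stability bound from the preceding theorem into the generic stability-to-generalization reduction of Theorem~\ref{theorem:gene_gap_stability}, with the factor $m$ in $4m\beta_m$ cancelling the $1/m$ in $\beta_m$ and numerical constants absorbed into the $\cO$ notation, precisely as in the noiseless Corollary~\ref{cor:stability_generalization}. No gaps.
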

}

\end{document}